\documentclass[11pt,a4paper]{amsart}
\usepackage{amsfonts, amsthm}
\usepackage{hyperref}
\usepackage[english]{babel}
\usepackage{amsmath}
\usepackage{amssymb}
\usepackage{url}
\usepackage{amsxtra}
\usepackage{mathrsfs}
\usepackage{fancyhdr}
\usepackage{enumerate}
\usepackage{graphicx}
\usepackage{pdfsync}
\usepackage{psfrag}
\usepackage{tikz}
\usepackage{soul}

\tikzstyle{nodino}=[circle,draw,fill,inner sep=0pt,minimum size=0.5mm]
\tikzstyle{infinito}=[circle,inner sep=0pt,minimum size=0mm]
\tikzstyle{nodo}=[circle,draw,fill,inner sep=0pt,minimum size=0.5*\widthof{k}]

\newcommand{\blu}[1]{{\color{blue}#1}}

\setlength{\textwidth}{16cm}
\setlength{\textheight}{24cm}
\setlength{\oddsidemargin}{0cm}
\setlength{\evensidemargin}{0cm}
\setlength{\marginparwidth}{2cm}
\hoffset=0truecm
\voffset=-1.5truecm
\footskip = 30pt
\marginparsep=-0.2cm

\newtheorem{theorem}{Theorem}

\newtheorem{lemma}{Lemma}[section]
\newtheorem{proposition}[lemma]{Proposition}

\newtheorem{assumption}{Assumption}

\theoremstyle{definition}
\newtheorem{remark}[lemma]{Remark}

\numberwithin{equation}{section}



\newcommand{\beq}{\begin{equation}}
\newcommand{\eeq}{\end{equation}}
\newcommand{\be}{\begin{equation*}}
\newcommand{\ee}{\end{equation*}}
\newcommand{\n}{\noindent}

\newcommand{\vertiii}[1]{{\left\vert\kern-0.25ex\left\vert\kern-0.25ex\left\vert #1 
    \right\vert\kern-0.25ex\right\vert\kern-0.25ex\right\vert}}

\newcommand{\RE}{\mathbb R}
\newcommand{\erre}{\mathbb R}

\newcommand{\NA}{\mathbb N}

\newcommand{\DD}{\mathcal D}

\newcommand{\GG}{\mathcal{G}}

\newcommand{\supp}{\operatorname{supp}\,}

\newcommand{\lf}{\left}
\newcommand{\ri}{\right}
\newcommand{\ve}{\varepsilon}
\newcommand{\al}{\alpha}

\newcommand{\ga}{\gamma}

\newcommand{\la}{\lambda}
\newcommand{\de}{\delta}

\newcommand{\ome}{\omega}
\DeclareMathOperator{\sech}{sech}


\renewcommand{\Re}{\operatorname{Re}\,}

\renewcommand{\leq}{\leqslant}
\renewcommand{\geq}{\geqslant}

\newcommand{\x}{\underline{x}}
\newcommand{\y}{\underline{y}}
\renewcommand{\v}{\underline{v}}

\newcommand{\f}{\frac}
\newcommand{\EE}{\mathcal E}
\newcommand{\C}{\mathbb{C}}
\newcommand{\VV}{R}
\newcommand{\WW}{S}
\newcommand{\ZZ}{Z}

\newcommand{\vol}{\operatorname{Vol}}
\newcommand{\one}{{\bf 1}}
\renewcommand{\a}{\underline{a}}
\renewcommand{\b}{\underline{b}}

\title[]{Ground state and orbital stability for the NLS equation on a general starlike graph with potentials}

\author[]{Claudio Cacciapuoti}
\address{Dipartimento di Scienza e Alta Tecnologia, Universit\`a dell'Insubria, Via Valleggio 11, 22100 Como, Italy, EU}
\email{claudio.cacciapuoti@uninsubria.it	}%

\author[]{Domenico Finco}
\address{Facolt\`a di Ingegneria, Universit\`a Telematica
Internazionale Uninettuno,  Corso Vittorio Emanuele II 39, 00186 Roma,
Italy}
\email{d.finco@uninettunouniversity.net}

\author[]{Diego Noja}
\address{Dipartimento di Matematica e Applicazioni, Universit\`a
 di Milano Bicocca,  via R. Cozzi, 53, 20125 Milano, Italy}
\email{diego.noja@unimib.it} 

\date{}



\begin{document}

\begin{abstract} We consider a  nonlinear Schr\"odinger equation (NLS) posed on a graph or network composed of a generic compact part to which a finite number of half-lines are attached. We call this structure a starlike graph. At the vertices of the graph 
interactions of $\delta$-type can be present and an overall external potential is admitted. Under general assumptions on the potential, we prove that the NLS  is globally well-posed in the energy domain.

We are interested in minimizing the energy of the system on the manifold of constant mass ($L^2$-norm). When existing, the minimizer
is called ground state and it is the profile of an orbitally stable standing wave for the NLS evolution. We prove that a ground state exists for sufficiently small masses whenever the  quadratic part of the energy admits a simple isolated eigenvalue at the bottom of the spectrum (the linear ground state). This is a wide generalization of a result previously obtained for a star graph with a single vertex.
The main part of the proof is devoted to prove the concentration compactness principle for starlike structures; this is non trivial
due to the lack of translation invariance of the domain. Then we show that a minimizing bounded $H^1$ sequence for the constrained NLS energy with external linear potentials is in fact convergent if its mass is small enough. Examples are provided with discussion of hypotheses on the linear part.
\end{abstract}

\maketitle

\begin{footnotesize}
 \emph{Keywords:} Quantum graphs; non-linear Schr\"odinger equation; concentration-compactness techniques. 
 
 \emph{MSC 2010:}  35Q55, 81Q35, 35R02.  
 \end{footnotesize}

\section{Introduction}
Analysis on metric graphs and networks is a growing subject with many potential applications of physical and technological character. The interest in these structures, also from a mathematical point of view lies in the fact that they are relatively simple analytically, being essentially one dimensional, but on the other hand they can have in a sense arbitrary complexity due to nontrivial connectivity and topology.\\ A large part of the literature is devoted to linear equations on graphs (see \cite{BerKu, Mugnolo} for an overview of theory and the many applications), with special emphasis on Schr\"odinger equation describing the so called quantum graphs. Recently nonlinear equations have attracted attention, and a certain amount of mathematical work has been done on nonlinear Schr\"odinger equation on quantum graphs, at least in some special situations (see for example \cite{[ACFN1],[ACFN2],[ACFN4],[ACFN3],cfn15,[NPS15],[MP16],AST1,AST2,[PS16]}; a review with references to related physical research is in \cite{Noja14}).
In this paper we settle some issues about the nonlinear Schr\"odinger equation on a quantum graph $\GG$, composed by a compact core to which a finite number of  half-lines are attached (and at least one). We refer to this structure as a {\it starlike graph} (see Fig.1). \\ Our main interest is in showing that the NLS dynamics admits on a starlike graph a ground state under mild and natural hypotheses. We mean as ground state a standing solution of NLS on the graph which minimizes the system energy at a fixed constant mass, i.e. $L^2$-norm. A previous result in a very special case was given in the paper \cite{acfn-aihp}, where a single vertex with $N$ half lines -- a so called {\it star graph} -- with a delta interaction was considered.
Here we extend that result widely generalizing the topology of the compact core, and admitting the (possible) presence of external potentials on the graph. We however retain the power nonlinearity to avoid wordy statements but this limitation is not really necessary. The NLS on the graph is an equation of the form
\begin{equation}\label{equation}
i\frac{d}{dt}\Psi=H\Psi -|\Psi|^{2\mu}\Psi
\end{equation}
where: 
\begin{enumerate}[i)]
\item  $\Psi$ is a multicomponent function where every component is a complex function on a single edge of the graph; 
\item The operator $H$ is a Schr\"odinger operator on the graph acting on every edge as $-\frac{d^2}{dx^2} +W$ and complemented with suitable boundary condition to make it selfadjoint on its domain; 
\item The nonlinearity, of power type and again defined edge by edge, is focusing (the minus sign).
\end{enumerate} 
For further details and complete hypotheses and definitions, see the following section. 
The previous equation is globally well-posed in energy or form domain $H^1(\GG)$, which is the usual Sobolev space including continuity at vertices, for every $\mu\in [0,2)$, the {\it subcritical} range, see Section \ref{s:wp} below for a proof. In the {\it critical} case $\mu=2$ the solution is only defined for small initial data, as in the case of the line. In any case the mass of the solution, i.e. its $L^2$-norm $\|\Psi\|^2$, and the energy
\be 
E[\Psi]=E^{lin}[\Psi ] -\frac{1}{\mu+1} \|\Psi \|_{2\mu+2}^{2\mu + 2} 
= \| \Psi ' \|^2 +(\Psi,W\Psi)+ \sum_{\v\in V} \al(\v) |\Psi(\v)|^2  -\frac{1}{\mu+1} \|\Psi \|_{2\mu+2}^{2\mu + 2} 
\ee
are conserved quantities.
Of special importance is the quadratic contribution to the energy
\be
E^{lin}[\Psi ] = \| \Psi ' \|^2 + (\Psi,W\Psi) +  \sum_{\v\in V} \al(\v) |\Psi(\v)|^2. 
\ee
It contains three terms. The kinetic energy, a potential term defined by $W$ and the last term which is the energy associated to delta interactions concentrated at vertices $\v$ of the graph; we do not assume definite sign on the strengths $\alpha(\v)$ of the interaction at vertices.\\
Our hypotheses are rather simple and they regard only the topology of the graph and the quadratic part of the energy.
\begin{assumption}\label{a:0} $\GG$ is a connected graph with a finite number of edges and vertices, and it is composed by a compact core and at  least one infinite edge (one half-line).
\end{assumption}
\begin{assumption}\label{a:1}
$W = W_+ - W_-$ with $W_\pm\geq 0$, $W_+\in L^1(\GG) + L^{\infty}(\GG)$, and $W_-\in L^r(\GG)$ for some $r\in [1,1+1/\mu]$.  
\end{assumption}
\begin{assumption}\label{a:2} 
$\inf \sigma(H):= -E_0$, $E_0>0$ and it   is an isolated, non degenerate eigenvalue. 
\end{assumption}
Our main theorem gives the existence of nonlinear ground state under the above assumptions.
\begin{theorem} \label{t:prob1}
Let $0<\mu<2 $ and consider on a starlike graph $\GG$ the following minimization problem:
\begin{equation}\label{minimization}
-\nu = \inf \{ E[\Psi] \text{ s.t. } \Psi\in \EE , \, M[\Psi]=m \}.
\end{equation}
If {\text Assumptions} \ref{a:0}, \ref{a:1}, and \ref{a:2} hold true, then $mE_0<\nu<+\infty$ for any $m>0$.  Moreover, there exists $m^\ast>0$ such that for $0< m<m^\ast$  there exists $\hat \Psi \in H^1(\GG)$, with $M[\hat\Psi]=m$,  such that $E[\hat \Psi] =-\nu $.
\end{theorem}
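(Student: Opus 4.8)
The plan is to follow the classical concentration-compactness scheme of Cahn--Lions, adapted to the starlike geometry. First I would establish the coercivity and basic bounds: using Assumption \ref{a:1} together with the Gagliardo--Nirenberg inequality on $\GG$ (which holds, with the same scaling exponents as on $\RE$, since every edge is one-dimensional), one shows that $E^{lin}$ is bounded below on the mass-$m$ sphere and, more precisely, that the negative part $W_-$ and the delta terms $\al(\v)|\Psi(\v)|^2$ are form-bounded by the kinetic energy with relative bound zero; hence $E[\Psi]$ is bounded below and any minimizing sequence is bounded in $H^1(\GG)$. To see that $-\nu<+\infty$ and in fact $\nu>mE_0$, I would test with the rescaled linear ground state $\varphi_0$: since $E^{lin}[\sqrt{m}\,\varphi_0]=-mE_0$ and the nonlinear term is strictly negative, plugging a slightly dilated version gives $E<-mE_0$; the strict inequality $\nu>mE_0$ (i.e. $-\nu<-mE_0$) follows because $\varphi_0$ being an $L^2$-eigenfunction has $\|\varphi_0\|_{2\mu+2}>0$, so the nonlinear gain is strictly positive. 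The upper bound $-\nu>-\infty$ is the coercivity just discussed.

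Next comes the heart of the argument: the concentration-compactness dichotomy for the family of measures $\rho_n=|\Psi_n|^2$ on $\GG$ associated to a minimizing sequence. I would prove the strict subadditivity inequality
\begin{equation*}
-\nu(m) < -\nu(\theta) - \nu_\infty(m-\theta), \qquad 0<\theta<m,
\end{equation*}
where $\nu_\infty$ is the corresponding infimum for the problem "at infinity", i.e. on a single half-line $\RE_+$ (or $\RE$) with $W\equiv0$ and no vertex interaction — this is exactly the translation-invariant NLS on the line, whose ground states and energy are explicit. The point is that the presence of the linear ground state makes the "local" problem strictly more favourable than the problem at infinity, so that mass escaping to infinity along a half-line is energetically penalized; concretely, one uses $-\nu(m)\le E^{lin}[\sqrt m\,\varphi_0]+\text{(nonlinear)}<-mE_0 \le -\nu_\infty(m)$, combined with the standard scaling subadditivity $-\nu(\theta)+(-\nu_\infty(m-\theta)) \ge$ something comparable, to close the strict inequality. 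This rules out dichotomy. Vanishing is ruled out in the usual way: if $\sup_{y}\int_{B(y,1)\cap\GG}|\Psi_n|^2\to0$, then $\|\Psi_n\|_{2\mu+2}\to0$ by the graph version of Lions' vanishing lemma, forcing $\liminf E[\Psi_n]\ge \liminf E^{lin}[\Psi_n]\ge -mE_0 > -\nu$, a contradiction. Hence compactness holds: up to translations, $\Psi_n$ converges. Because $\GG$ has a compact core and finitely many half-lines, "translations" can only push mass along the half-lines; but if the translation parameters $y_n$ were unbounded the limit would solve the problem at infinity, again contradicting strict subadditivity — so $y_n$ stays bounded and $\Psi_n\to\hat\Psi$ strongly in $L^2(\GG)$, then in $H^1(\GG)$ by weak lower semicontinuity of $E^{lin}$ and convergence of the energy, with $M[\hat\Psi]=m$ and $E[\hat\Psi]=-\nu$.

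I expect the main obstacle to be the localization/splitting step underlying the dichotomy, precisely because of the lack of translation invariance flagged in the abstract. On $\RE^n$ one cuts $\Psi_n$ with bump functions centred at a single point; on a starlike graph one must design cut-offs adapted to the branching structure — a "core part" supported near the compact core and finitely many "tail parts" supported far out on each half-line — and control the energy error coming from (i) the derivative hitting the cut-off, (ii) the cross terms in $\|\Psi_n\|_{2\mu+2}^{2\mu+2}$ across the junctions, and (iii) the fact that the vertex $\delta$-terms and the potential $W$ live only in the core region. One must check that each tail piece, after translation, really does converge to a solution of the free problem on a half-line (continuity/Kirchhoff conditions degenerate in the limit), and that the parameter $m^\ast$ emerges precisely as the threshold below which the strict subadditivity inequality can be guaranteed — in the subcritical range $\mu<2$ smallness of $m$ makes the nonlinear correction a lower-order perturbation of the linear gap $E_0$, which is what forces $-\nu(m)<-mE_0$ to survive the comparison with $-\nu_\infty$. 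Handling these geometric cut-off estimates carefully, rather than any single hard inequality, is where the real work lies.
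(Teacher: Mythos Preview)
Your overall scheme is the same as the paper's: establish coercivity and the bounds $mE_0<\nu<\infty$, then run concentration--compactness and eliminate vanishing, dichotomy, and escape to infinity. The execution differs at two points, and one of them is not fully justified in your sketch.

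\textbf{Dichotomy.} The paper does \emph{not} use the mixed subadditivity $-\nu(m)<-\nu(\theta)-\nu_\infty(m-\theta)$. It uses a pure scaling argument on the graph itself: given the two pieces $R_k,S_k$ with masses $\to\tau,m-\tau$, one rescales each to mass $m$, invokes $E[\delta_k R_k],E[\gamma_k S_k]\ge-\nu$, and exploits the homogeneity
\[
E[\Psi]=\frac{1}{\delta^2}E[\delta\Psi]+\frac{\delta^{2\mu}-1}{\mu+1}\|\Psi\|_{2\mu+2}^{2\mu+2}
\]
to conclude $\liminf(E[R_k]+E[S_k])\ge-\nu+\frac{\theta-1}{\mu+1}\liminf\|\Psi_{n_k}\|_{2\mu+2}^{2\mu+2}>-\nu$, contradicting $\limsup(E[R_k]+E[S_k])\le-\nu$. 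This works for \emph{every} $m>0$ and never refers to the problem at infinity. Your Lions-type subadditivity is a legitimate alternative, but as written it has a gap: the graph concentration--compactness lemma does not assert that in dichotomy one of the pieces sits ``at infinity''; you would need a separate geometric argument that, since the supports of $R_k,S_k$ separate and the core is bounded, one of them eventually carries no potential and no vertex terms, so that its energy is bounded below by $-\nu_\infty$ of its mass. And even granting that, your sketch (``combined with the standard scaling subadditivity \dots to close the strict inequality'') does not actually prove the strict inequality for all $\theta\in(0,m)$.

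\textbf{Escape to infinity.} The paper isolates this as a separate ``runaway'' sub-case of compactness and excludes it by comparing $-\nu(m)$ with the explicit half-line/line soliton energy $-\gamma_\mu m^{(2+\mu)/(2-\mu)}$. For the trial function giving the upper bound on $-\nu(m)$ the paper invokes a full Lyapunov--Schmidt bifurcation analysis (Theorem~\ref{t:bif}) to produce $\Phi(\omega(m))$ with $E[\Phi(\omega(m))]=-E_0m+o(m)$. Your choice $\sqrt{m}\,\varphi_0$ is simpler and already yields $E[\sqrt{m}\,\varphi_0]=-E_0m-\tfrac{m^{\mu+1}}{\mu+1}\|\varphi_0\|_{2\mu+2}^{2\mu+2}$, which beats $-\gamma_\mu m^{(2+\mu)/(2-\mu)}$ for small $m$ since $(2+\mu)/(2-\mu)>1$; so on this step your route is more economical. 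In both approaches the smallness threshold $m^\ast$ enters precisely here and nowhere else.
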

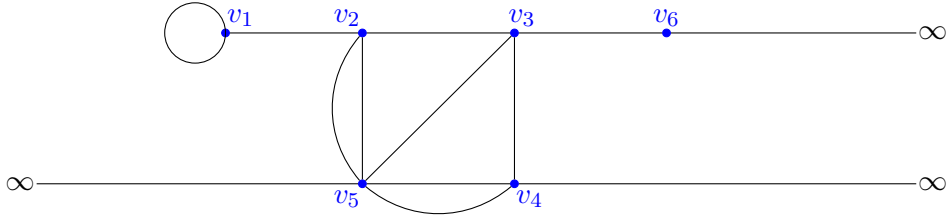
\begin{figure}[t]
\begin{center}
\begin{tikzpicture}
\node at (-6.5,0) [infinito](0) {${\infty}$};
\node at (-2,0) [nodo,blue] (5) {};
\node at (0,0) [nodo,blue] (4) {};
\node at (-2,2) [nodo,blue] (2) {};
\node at (0,2) [nodo,blue] (3) {};
\node at (5.5,0) [infinito] (8) {${\infty}$};
\node at (5.5,2) [infinito] (7) {${\infty}$};
\node at (2,2) [nodo,blue] (6) {};
\node at (-3.8,2) [nodo,blue] (1) {};
\node at (-2.2,2.2){\blu{$v_2$}};
\node at (0.1,2.2){\blu{$v_3$}};
\node at (0.2,-0.2){\blu{$v_4$}};
\node at (-2.2,-0.2){\blu{$v_5$}};
\node at (2,2.2){\blu{$v_6$}};

\node at (-3.6,2.2){\blu{$v_1$}};

\draw [-,black] (-4.2,2) circle (0.4cm) ;
\draw [-,black] (5) -- (2);
\draw [-,black] (0) -- (5);
\draw [-,black] (5) -- (4);
\draw [-,black] (5) to [out=-40,in=-140] (4);
\draw [-,black] (5) to [out=130,in=-130] (2);
\draw [-,black] (5) -- (3);
\draw [-,black] (1) -- (2);
\draw [-,black] (2) -- (3);
\draw [-,black] (3) -- (6);
\draw [-,black] (4) -- (3);
\draw [-,black] (6) -- (7);
\draw [-,black] (4) -- (8);
\end{tikzpicture}
\end{center}
\caption{13 edges (10 interior, 3 exterior); 6 vertices; one tadpole.}
\end{figure}
\noindent
We briefly comment on  the assumptions.\\
Assumption \ref{a:0} is a topological one. We remark that if $\GG$ is a compact connected graph without infinite edges, the minimization problem  \eqref{minimization} admits a solution whenever the energy functional $E[\Psi]$ is bounded from below. \\
Assumption \ref{a:1} is a rather weak hypothesis which is sufficient to guarantee that $E^{lin}$ is the quadratic form of a selfadjoint operator bounded from below, see also Remark \ref{r:train}.  We stress  that  the stronger  assumption  $W_-\in L^r(\GG)$  is needed only in the final part of the proof of Th. \ref{t:prob1}, to guarantee that the $W_-$-terms in the energy functional $E[\Psi]$ are negligible whenever the energy functional is evaluated on sequences that escape at infinity on one of the half-lines (runaway sequences), see Eq. \eqref{holiday} below. All the results before the limit \eqref{holiday} hold true under the weaker assumption  $W\in L^1(\GG) + L^{\infty}(\GG)$.  \\
Assumption \ref{a:2} assures existence of a unique {\it linear} ground state and it is satisfied in many relevant examples, such as the following:
\begin{enumerate}[a)]
\item  No delta terms, i.e. $\alpha(\v) =0$ for all $\v$ (also called Kirchhoff boundary conditions at vertices,  see, e.g.   \cite{[KS99]}) and a sufficiently well behaved and decaying external potential attractive in the mean, i.e. such that $\int_\GG W <0.$ 
In the pure Kirchhoff case (with no potentials) an extensive analysis of NLS with power nonlinearity has been given in the recent papers \cite{AST1,AST2}, where in particular it is shown that existence of a ground state for subcritical nonlinearity holds true only in some exceptional cases, the simplest one being the tadpole graph \cite{cfn15, [NPS15]}. Here we show that summing a small negative potential restores the ground state generically.
\item Absence of potential term and delta interactions negative in the mean: $\sum_{\v\in V} \alpha(\v)<0$ 
(Se also \cite{EJ} for an explicit example in this case).
\item A mixing of the two: delta interaction at the vertices and well behaved potentials with negative potential energy: $\sum_{\v\in V} \alpha(\v) + \int_\GG W<0$.
\end{enumerate}
Notice that at the level of quadratic form and in this one dimensional problem, strictly speaking, one could consider on the same footing both the delta terms and the regular potential term. We have a preference to keep separate the two contributions because this is the usual way they are treated in quantum graph literature.   \\
We comment now briefly on the proof strategy. As in \cite{acfn-aihp} we want to make use of concentration compactness techniques, but we have to cope with the lack of translational invariance of the graph. We show that for starlike graphs the Concentration Compactness Lemma \ref{l:cc} is valid. We note that with respect to the standard concentration compactness result in $\RE^n$ see, e.g. \cite{Caz03, Caz06},  we have to split the compact case in two sub-cases, named {\it runaway} and {\it convergent}. In the runaway case a   minimizing,  bounded in $H^1(\GG)$, sequence $\Psi_n$ eventually escapes on a single distinguished exterior edge, in the sense that any of its  $L^p$-norms with $p\geq 2$ on the other edges vanishes and the same occurs for the $L^p$-norm on any bounded part of the distinguished edge. In the convergence case, which is the one we are interested in, an $H^1(\GG)$-bounded sequence admits a converging subsequence in $L^p(\GG), \ p\geq 2$.
So that, to get convergence,  we have to exclude vanishing, dichotomy and runaway case. In particular, to exclude the runaway case, we prove, by use of Bifurcation Theory, the existence of a branch of nonlinear solutions of the stationary NLS which is born from the linear ground state. This is the point where we make use of Assumption \ref{a:2}. Along this branch the $L^2$-norm of the solution is small near the bifurcation point. We show that  in this case a runaway minimizing sequence has an energy which is not compatible with the energy deduced from bifurcation theory in the small mass regime. Here is the only point where we use the hypothesis of small mass in Th. \ref{t:prob1}. With the present technique it is not possible to exclude that for big masses the minimizing sequence is runaway. In the simpler case of star graph this possibility has been excluded in \cite{ACFN16} by using  a finite dimensional reduction, a procedure which however does not work in more structured graphs or in the presence of external potentials. We remark that in the case of the line with a delta interaction, the existence of the ground state for every value of the mass was given in \cite{ANV12}, which covers also other examples of point interactions, while an even more singular interaction is treated in \cite{ANcmp}.\\
When global well-posedness of the model holds true, the ground state, being a constrained minimum of the energy, is orbitally stable. We provide a global well-posedness result in $H^1(\GG)$ in Th. \ref{t:wp}, filling in  a gap in the literature.\\
We end the introduction with an outline of the paper. In Section 2 we give preliminary definitions and results on quantum graphs (Secs. \ref{s:qg} and \ref{s:gn}), we make precise our hypotheses on the quadratic part of the energy and comment about the validity of Assumption \ref{a:2} (Secs. \ref{s:qf} and \ref{s:lgs}); finally we give well-posedness and mass and energy conservation for the time dependent nonlinear Schr\"odinger equation on a starlike graph (Secs. \ref{s:enlp} and \ref{s:wp}). In Section 3 the Concentration Compactness lemma is extended to the case of starlike networks. All statements are given explicitly, but only the steps which need essential modification of the original result valid on $\RE^N$ are proved while references are provided for the missing but straightforward steps.  
In Section 4 a bifurcation analysis showing the existence of a branch of standing waves emanating from the vanishing solution under the validity of Assumption \ref{a:1} and \ref{a:2} is proved, see Th. \ref{t:bif}. Estimates on the size of the branch element in terms of relevant parameters are given as well. In the last Section 5 we use the  results obtained  in Section 3 and 4 to show that only the convergence case in Concentration Compactness alternative holds if the minimizing sequence has a mass sufficiently small, which ends the construction needed for the proof of Th. \ref{t:prob1}. 

Throughout the paper $c$ and $C$ denote generic positive constants whose value may change form line to line.

\section{Preliminaries}
\subsection{Quantum Graphs\label{s:qg}}
We consider a connected metric graph $\GG=(V,E)$ where $V$ is the set of vertices and $E$ is the set of edges. We assume that the cardinalities $|V|$ and $|E|$ of $V$ and $E$
are finite. We identify each edge $e\in E$  with length $L_e \in (0,\infty] $ with the interval $I_e=[0, L_e]$, if $L_e$ is finite, or $[0,\infty)$, if $L_e$ is infinite.
The set of edges with finite length is denoted by $E^{in}$ while the set of edges with infinite length is denoted by $E^{ex}$.
Moreover we associate each finite length edge with two vertices, and each infinite length edges with one vertex. The notation $\v\in e$ with $\v\in V$ and
$e\in E$, denotes that $\v$ is a vertex of the edge $e$.
Two vertices $\v_1$ and $\v_2$ are adjacent, $\v_1 \sim \v_2$ if they are vertices of a common edge which connects them. The degree of
a vertex is the number of edges emanating from it. We denote by $\{e\prec \v\}$ the set of edges connecting the vertex $e$.
We fix a coordinate $x$ on each interval $I_e$ such that $x=0$ and $x=L_e$ correspond to vertices  if $L_e<\infty$ while if $L_e=\infty$ the vertex
attached to the rest of the graph corresponds to $x=0$.
Any choice of orientation of finite length edges is equivalent for our purposes. 
To avoid ambiguities, from now on we will denote
points on the graph with $\x= (e,x)$, where $e\in E$ identifies the edge and  $x\in I_e $ is the  coordinate on the corresponding edge. 
The length of a path is well defined due to the coordinates on edges and therefore there is a natural distance on $\GG$. Given 
$\x$ and $\y$ on $\GG$ the distance $d(\x, \y) $ is defined as the infimum of the length of the paths connecting the two points.
Then $(\GG, d)$ is a locally compact metric space and it is compact if and only if $L_e <\infty$  for $\forall e \in E$.
In this paper we will assume that there is at least one edge with infinite length, so that the considered graph is non compact.
A function $\Psi: \GG \to \C$ is equivalent to a family of functions $\{ \psi_e \}_{e\in E}$ with $\psi_e : I_e \to \C$. In our notation, if $\x = (e,x)$
\[\Psi(\x) = \psi_e(x).\] 
The spaces $L^p (\GG)$, $1\leq p\leq \infty$, are made of functions $\Psi$ such that $\psi_e \in L^p (I_e)$ for all $ e\in E$ and 
\[
\| \Psi \|_p^p = \sum_{e\in E} \|\psi_e \|^p_{L^p (I_e) }, \, 1\leq p<\infty \qquad
\|\Psi \|_\infty = \max_{e\in E}  \|\psi_e \|_{L^\infty (I_e) }.
\]
We denote by $( \cdot ,\, \cdot )  $ the inner product associated with  $L^2(\GG)$. When $p=2$, the index will be omitted
We denote by $C(\GG)$ the set of continuous functions on $\GG$ and introduce the spaces 
\[H^1(\GG) : = \left\{\Psi\in C(\GG) \;\textrm{s.t.}\; \psi_e \in H^1(I_e) \; \forall e \in E\right\}\]
equipped with the norm 
\[\| \Psi \|_{H^1(\GG)}^2 \ = \ \sum_{e\in E} \| \psi_e \|_{H^1(I_e)}^2.
\]
and 
\[H^2(\GG) : = \left\{\Psi\in H^1(\GG) \;\textrm{s.t.}\; \psi_e \in H^2(I_e) \; \forall e \in E\right\}\]
equipped with the norm 
\be 
\| \Psi \|_{H^2(\GG)}^2 \ = \ \sum_{e\in E} \| \psi_e \|_{H^2(I_e)}^2.
\ee
In the following, whenever a functional norm refers to a function defined on the graph,
 we omit the symbol $\GG$. 
 \subsection{Gagliardo-Nirenberg inequalities on graphs\label{s:gn}} Let $\GG$ be  any non-compact graph, then if    $p,q\in[2, +\infty]$, with $p\geq q$, and $\alpha = \frac{2}{2+q}(1-q/p)$, there exists $C$ such that 
\begin{equation}\label{gajardo2}
\| \Psi \|_{p} \ \leqslant \ C \| \Psi^\prime \|^\alpha \| \Psi \|^{1-\alpha}_q,
\end{equation}
for all $\Psi\in H^1(\GG)$.  \\
A proof of inequality \eqref{gajardo2} for $q=2$, which is easily generalized to any  $q\geq2$,  is in \cite{AST2}.  \\ 
If the graph is compact inequality \eqref{gajardo2} does not hold true (it is clearly violated by constant functions), but  it can be replaced by the weaker inequality 
\begin{equation}\label{gajardo3}
\| \Psi \|_{p} \ \leqslant \ C \| \Psi\|^\alpha_{H^1} \| \Psi \|^{1-\alpha}_q,
\end{equation}
which hold true on any graph  if    $p,q\in[2, +\infty]$, with $p\geq q$, and $\alpha = \frac{2}{2+q}(1-q/p)$,  for all $\Psi\in H^1(\GG)$. \\
A proof of inequality \eqref{gajardo3} for compact graphs is in \cite{Mugnolo}, for non compact graphs it is a trivial consequence of \eqref{gajardo2}. \\
 See also \cite{H} for a collection of useful inequalities on graphs. In what follows we shall always use the weaker inequality \eqref{gajardo3}. \\ 
 
\subsection{Linear Hamiltonian and Quadratic form\label{s:qf}}
We denote by $H$ the Hamiltonian with a $\delta$ coupling of strength $\alpha (v)\in\RE$  at each vertex  and a potential term $W$  on each edge. It is defined as the operator in $L^2(\GG)$ with  domain 
\begin{equation*}
{\mathcal D} (H) :=  \left\{\Psi \in H^2 \textrm{ s.t. }   \sum_{ e\prec \v} \partial_o \psi_e (\v)= \alpha(\v) \psi_e (\v)\quad \forall \underline{ v} \in V
\right\}.
\end{equation*}
where we have denoted by $\partial_o$ the outward derivative from the vertex, it coincides with $\frac{d}{dx}$ or $-\frac{d}{dx}$ according
the orientation on the edge. The action of $H$ is defined by
\begin{equation*}
 (H\Psi)_e = - \psi_e '' + W_e \psi_e,
\end{equation*}
where $W_e$ is the component of the potential $W$ on the edge $e$.\\ 
In the following we will write $V= V_- \cup V_0 \cup V_+ $ where $V_-$, respectively $V_0$, $V_+$, is the set of vertices such that $\al(\v)$ is negative, respectively null, positive.
As recalled in the Introduction,  Assumption \ref{a:1} implies in particular that operator $H$ is a selfadjoint operator on $L^2(\GG)$.
The quadratic form of this operator
is defined on the energy space given by $H^1( \GG)$ and it is explicitly given by
\begin{equation*}
E^{lin}[\Psi ] =  \| \Psi ' \|^2 +(\Psi,W\Psi) +   \sum_{\v\in V} \al(\v) |\Psi(\v)|^2
\end{equation*}
Notice that $\Psi(\v)$ is well defined due to the continuity condition in $H^1(\GG)$. 
\begin{remark}\label{r:train}
Indeed one can prove that under Assumption \ref{a:1} one has 
\begin{equation}\label{train}
\Big|  (\Psi,W\Psi) +    \sum_{\v\in V} \al(\v) |\Psi(\v)|^2\Big| \leq a \|\Psi'\|^2 + b \|\Psi\|^2, \qquad \text{with }0<a<1, \, b>0,
\end{equation}
which, by KLMN theorem, implies that the form $E^{lin}$ is closed and hence defines a selfadjoint operator. It is easy to prove that the corresponding operator coincides with $H$. To prove that the bound \eqref{train} holds true, first note that by Assumption \ref{a:1} we have that $W\in  L^1(\GG) + L^{\infty}(\GG)$. Moreover, by Gagliardo-Nirenberg inequalities, setting $W = W_1 + W_\infty$
\[\begin{aligned}
|(\Psi,W \Psi)|  \leq & \|W_1\|_1 \|\Psi\|_\infty^2  + \|W_\infty\|_\infty \|\Psi\|^2 \\ 
\leq & C \|W_1\|_1\|\Psi\|_{H^1}\|\Psi\|   + \|W_\infty\|_\infty \|\Psi\|^2  \leq  \ve \|\Psi'\|^2 + b_W \|\Psi\|^2,
\end{aligned}\]
where we used the trivial inequality $\|\Psi\|_{H^1}\|\Psi\| \leq  \ve \|\Psi\|^2_{H^1}/2 +  \|\Psi\|^2 /(2\ve) $ for all $\ve > 0$. Similarly, 
\[\left| \sum_{\v\in V} \al(\v) |\Psi(\v)|^2 \right| \leq C_\alpha \|\Psi\|_{\infty}^2\leq C_\alpha \|\Psi\|_{H^1}\|\Psi\|  \leq   \ve \|\Psi'\|^2 + b_\alpha \|\Psi\|^2. 
\]
\end{remark}

Let us define
\[
-E_0 = \inf \left\{ E^{lin} [ \Psi] , \, \Psi\in H^1(\GG),\; \|\Psi\| = 1\right\}.
\]
This corresponds to the bottom of the spectrum of $H$, and it is negative and simple by Assumption \ref{a:2}; we will denote by $\Phi_0$ the corresponding normalized  
 eigenfunction.

\subsection{Linear ground state\label{s:lgs}}
Assumption \ref{a:2} allows to apply bifurcation theory from an eigenvalue in its easiest version and to construct the nonlinear ground state. We stress that there is no obstruction in principle to consider bifurcation from a degenerate eigenvalue but we prefer to avoid unnecessary complications.  However, being not able to indicate a reference where the problem of non degeneracy 
of the ground state on a quantum graph is completely settled, we add some comments on the validity of Assumption \ref{a:2}.\\
Assumption \ref{a:1} with the additional request that the potential $W$ is relatively compact with respect to the laplacian on the graph (Kirchhoff or delta boundary conditions or a mixing of the two) assures that the Hamiltonian $H$ admits an essential spectrum  $\sigma_e(H)=[0,+\infty)$. So that, with this additional condition, a necessary hypothesis for Assumption \ref{a:2} be satisfied is that at least a negative eigenvalue exists.
It is straightforward to prove, considering a trial function constant on the compact part of the graph and smoothly vanishing at infinity that if $ \sum_{\v \in V } |\al (\v) | + \int_{\GG}W$ is negative the quadratic form is negative on this trial function and so a negative eigenvalue exists. Moreover the delta interactions contribute at most with a finite number of eigenvalues and the same holds true if $W_-$ is vanishing sufficiently fast at infinity. The additional request $\int_\GG W(\x) (1+|\x|)d\x<\infty $, as in the line or half line cases is sufficient to guarantee that the discrete spectrum is finite. In particular $-E_0<0$ is an isolated eigenvalue. \\The non degeneracy of the principal eigenvalue
is a subtler problem.  When a ground state exists this property is assured by and is  equivalent to the fact that the heat semigroup $S(t)=\text{exp}(-tH)$ associated to $H$ is positivity improving (see \cite{RSIV}, Thm XIII.44). Moreover, a positivity preserving heat semigroup $S(t)$ is positivity improving, its generator has no ground state degeneracy and its ground state is positive if and only if $S(t)$ is irreducible. The Hamiltonian operator $H_0$, corresponding to the operator $H$ with $W=0$,  generates a positive improving heat semigroup if the quantum graph does not contain tadpoles as subgraphs. This is proven for example in \cite{Mugnolo}, Thm 6.77 for a compact graph and in \cite{KPS} for the general case of non compact graphs. \\ Hence,  when $\inf\sigma(H_0)$ is an eigenvalue, the ground state of a quantum graph without tadpoles and delta boundary conditions at vertices is non degenerate and positive.\\ On the other hand, the absence of tadpoles is not necessary in general, because for example the tadpole graph itself with a delta boundary condition at vertices admits a simple ground state strictly positive, which is explicitly known (see also \cite{cfn15}). 
When $H_0$ is perturbed by the presence of an external potential $W$, it is easy to recognize that the positive part $W_+$ is  harmless and preserve irreducibility of the heat semigroup. If a negative part $W_-$ is present, Thm. XIII.45 in \cite{RSIV} gives a sufficient condition to have irreducibility. A version of this condition suitable for our purposes is given in \cite{Kelleretal} (see in particular Corollary A.3). This result implies that if $W$ is bounded from below and such that $\DD(Q_0+W)$ is dense in $\DD(Q_0)$, where $Q_0$ is the quadratic form of the operator $H_0$, then the heat semigroup generated by $H_0+W$ is positivity improving and  the ground state is non degenerate and positive. \\
We add, by way of information, that simplicity of {\it all} eigenvalues of quantum graph with delta interactions at vertices can be shown to be a generic property up to changing edge lengths and intensity of delta interactions, and again in absence of tadpoles (see \cite{Berko16} for details). 

\subsection{Energy of the nonlinear problem\label{s:enlp}}
The nonlinear energy reads
\[\begin{aligned}
E[\Psi]=& E^{lin}[\Psi ] -\frac{1}{\mu+1} \|\Psi \|_{2\mu+2}^{2\mu + 2}  \\ 
= & \| \Psi ' \|^2 + (\Psi,W\Psi)+  \sum_{\v\in V} \al(\v) |\Psi(\v)|^2  -\frac{1}{\mu+1} \|\Psi \|_{2\mu+2}^{2\mu + 2} 
\end{aligned}
\]
and it is defined on $H^1(\GG)$.
The mass functional is given by
\begin{equation*}
M[\Psi]=\|\Psi\|^2. 
\end{equation*}
Restricted on the mass constraint the nonlinear energy is bounded from below, as a consequence of
the Gagliardo-Nirenberg inequalities on graphs and of the hypotheses on the external potentials, in particular on $W_-$. This is shown in Section 5, at the beginning of proof of Main Theorem.

\subsection{Well-posedness\label{s:wp}}
The local well-posedness for Eq. \eqref{equation} in $H^1({\GG})$ proceeds along well known lines as an application of Banach fixed point theorem. Global well-posedness then follows by conservation laws.\\ 
We will give only a representative result; a more general or optimal result could be obtained by  making use of local in time Strichartz estimates, but we avoid this way  for two reasons. The first one is that our interest in this paper is to establish Th. \ref{t:prob1}, which is a variational property of the NLS on the graph  in $H^1(\GG)$. In the presence of global well-posedness in the same space the existence of ground state implies by well known arguments (see, e.g. \cite{[CL]})  its orbital stability. We do not need deeper or finer results at this level and in any case the picture is clear. \\We stress however that, to the best of our knowledge,  the result given in Th. \ref{t:wp} below  is not present in the literature.\\ The second reason is that Strichartz estimates should be preliminarily proven for starlike  graphs, and this would bring us too far apart. In fact dispersive estimates are known for graphs, but only in some special examples, in particular in trees with Kirchhoff or delta vertices \cite{[BI1], [BI2]} and on the tadpole graph \cite{AMN15}. \\
To proceed we introduce the following integral form of Eq. \eqref{equation}
\begin{equation}
\label{intform1}
\Psi(t) \ = \ e^{-iH t} \Psi_0 + i \int_0^t e^{-i
  H (t-s)}  |\Psi(s)|^{2\mu} \Psi(s)\, ds\equiv {\mathcal T}(\Psi)(t)
\end{equation}
\begin{proposition}[Local well-posedness in $H^1({\GG})$]
\label{loch2}
\n Let $\mu >0$ and  Assumption \ref{a:1} hold true. For any $\Psi_0 \in H^1(\GG)$, there exists $T > 0$ such that the
Eq.  \eqref{intform1} has a unique solution $\Psi \in C ([0,T),
H^1(\GG))
\cap C^1 ([0,T), H^1(\GG)^\star)$. Moreover, Eq. \eqref{intform1} has a maximal solution
defined on an interval of the form $[0, T^\star)$, and the following ``blow-up
alternative''
holds: either $T^\star = \infty$ or
\[
\lim_{t \to T^\star} \| \Psi(t) \|_{H^1(\GG)}
\ = + \infty.
\]
\end{proposition}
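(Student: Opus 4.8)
The plan is to establish local well-posedness for \eqref{intform1} by a contraction-mapping argument on a suitable closed ball of $C([0,T],H^1(\GG))$, following the classical Cazenave-type scheme, with the only graph-specific input being the Gagliardo--Nirenberg inequality \eqref{gajardo3} and the fact that $e^{-iHt}$ is a strongly continuous unitary group on both $L^2(\GG)$ and $H^1(\GG)$ (the latter because, by Assumption \ref{a:1} and Remark \ref{r:train}, the form norm of $H+b+1$ is equivalent to the $H^1(\GG)$ norm, and $e^{-iHt}$ commutes with $H$). First I would fix $\Psi_0\in H^1(\GG)$, set $R=2\|\Psi_0\|_{H^1}$, and on the complete metric space $X_{T,R}=\{\Psi\in C([0,T],H^1(\GG)):\sup_{t\in[0,T]}\|\Psi(t)\|_{H^1}\leq R\}$ (with the $C([0,T],H^1)$ metric, or for sharper control the weaker $C([0,T],L^2)$ metric) show that $\mathcal T$ maps $X_{T,R}$ into itself and is a contraction for $T$ small enough depending only on $R$.

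The key estimate is the Lipschitz bound for the nonlinearity. Writing $f(z)=|z|^{2\mu}z$, one has the pointwise inequality $|f(z_1)-f(z_2)|\leq C(|z_1|^{2\mu}+|z_2|^{2\mu})|z_1-z_2|$ and, for the derivative, $f(\Psi)'=(\mu+1)|\Psi|^{2\mu}\Psi' + \mu|\Psi|^{2\mu-2}\Psi^2\ove{\Psi'}$, so that on each edge $\|f(\Psi)'-f(\Phi)'\|_{L^2(I_e)}$ is controlled by $(\|\Psi\|_{L^\infty}^{2\mu}+\|\Phi\|_{L^\infty}^{2\mu})$ times $\|\Psi'-\Phi'\|_{L^2(I_e)}$ plus lower-order terms involving $\|\Psi-\Phi\|_{L^\infty}$. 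Summing over edges and using $\|\Psi\|_\infty\leq C\|\Psi\|_{H^1}$ (the $p=\infty$, $q=2$ case of \eqref{gajardo3}), one gets
\begin{equation*}
\|f(\Psi)-f(\Phi)\|_{H^1(\GG)}\leq C\big(\|\Psi\|_{H^1}^{2\mu}+\|\Phi\|_{H^1}^{2\mu}\big)\|\Psi-\Phi\|_{H^1(\GG)}.
\end{equation*}
Combined with $\|e^{-iHt}\|_{\mathcal L(H^1)}=1$ and $\int_0^t ds = t$, this yields $\|\mathcal T(\Psi)-\mathcal T(\Phi)\|_{C([0,T],H^1)}\leq C T R^{2\mu}\|\Psi-\Phi\|_{C([0,T],H^1)}$ and similarly $\|\mathcal T(\Psi)\|_{C([0,T],H^1)}\leq \|\Psi_0\|_{H^1}+CTR^{2\mu+1}$; choosing $T=T(R)$ small gives a strict contraction into $X_{T,R}$, hence a unique fixed point $\Psi\in C([0,T],H^1(\GG))$. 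That $\Psi\in C^1([0,T),H^1(\GG)^\star)$ follows by differentiating the Duhamel formula: $i\partial_t\Psi = H\Psi - f(\Psi)$ in $H^1(\GG)^\star$, with both terms continuous in $t$ with values in the dual (for $H\Psi$ because $H:H^1\to (H^1)^\star$ boundedly via the form, for $f(\Psi)$ because $f(\Psi)\in H^1\hookrightarrow L^2\hookrightarrow (H^1)^\star$ continuously in $t$).

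Uniqueness in the full space $C([0,T],H^1(\GG))$ (not just in the ball) follows by a standard Gronwall argument on the difference of two solutions using the same Lipschitz bound localized to the common $H^1$ bound on a short time interval, then iterating. Finally, for the maximal solution and the blow-up alternative, I would define $T^\star=\sup\{T>0:\text{a solution exists on }[0,T]\}$; if $T^\star<\infty$ and $\|\Psi(t)\|_{H^1}$ stayed bounded along some sequence $t_n\to T^\star$, the local existence time $T(R)$ depends only on the bound $R$, so one could restart the equation from $\Psi(t_n)$ and extend past $T^\star$, a contradiction, giving $\lim_{t\to T^\star}\|\Psi(t)\|_{H^1}=+\infty$. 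I expect the main obstacle to be purely bookkeeping rather than conceptual: carefully verifying the edgewise differentiation formula for $f(\Psi)'$ and the resulting $H^1(\GG)$ Lipschitz estimate (handling the $2\mu-2$ exponent when $\mu<1$ requires a small interpolation/Hölder argument so that no negative power of $|\Psi|$ actually appears), and checking that the continuity condition at the vertices is preserved under $\mathcal T$ — which it is, since $e^{-iHt}$ preserves $\DD(H)\subset H^1(\GG)$ and the Duhamel integrand stays in $H^1(\GG)$, hence continuous at vertices.
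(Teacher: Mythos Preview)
Your proposal is correct and follows essentially the same route as the paper: a Banach fixed point argument on a closed ball in $C([0,T],H^1(\GG))$, using boundedness of $e^{-iHt}$ on $H^1(\GG)$ (via the equivalence of the $H^1$ norm with the form norm of $H$), the $H^1$-Lipschitz bound for the power nonlinearity, and then the standard bootstrap for the blow-up alternative together with the form-pairing extension of $H$ for the $C^1$ regularity in the dual. One small overstatement: $e^{-iHt}$ is not an isometry on $H^1(\GG)$ but only uniformly bounded there (the equivalence of norms introduces a constant $C$, exactly as in the paper's inequality $\|e^{-iHt}\Psi\|_{H^1}\leq C\|\Psi\|_{H^1}$); this does not affect your argument, since any uniform constant is absorbed into the choice of $T$.
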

\begin{proof}
Consider the space $C([0,T], H^1(\GG)):=C_TH^1$ with the norm $\|\Psi\|_{C_TH^1}=\sup_{[0,T]}\|\Psi\|_{H^1}$ and a closed ball $\overline B_R\subset C_TH^1$. It is well known that $\overline B_{R}$ is a complete metric space. We prove that ${\mathcal T}:\overline B_R \to \overline B_R$ and moreover ${\mathcal T}$ is a contraction on $\overline B_R$ if $R$ and $T$ are suitably chosen.

We start by noting that for any $\Psi \in H^1$ one has that 
\begin{equation}\label{lennon}\|e^{-iHt}\Psi\|_{H^1} \leq C\|\Psi\|_{H^1}.\end{equation}
This inequality  follows from  the conservation of the $L^2$-norm $\|e^{-iHt}\Psi\| =\|\Psi\|$, and from  \[
(1-a)\|(e^{-iHt}\Psi)'\|^2  - b \|e^{-iHt}\Psi\|^2  \leq    E^{lin}[e^{-iHt}\Psi] = E^{lin}[\Psi] \leq (1+a)\|\Psi'\|^2  + b \|\Psi\|^2
\]
where we used the conservation of the linear energy $E^{lin}[e^{-iHt}\Psi] = E^{lin}[\Psi]$, and the bound \eqref{train}.  

By the bound \eqref{lennon},  Schwarz inequality, and  the property of $H^1(\GG)$ of being a Banach algebra
one has
\begin{equation*}
\begin{aligned}
\|{\mathcal T}(\Psi)(t)\|_{H^1}\leq & \left\|\ e^{-iH t} \Psi_0 + i \int_0^t e^{-i
  H(t-s)} |\Psi(s)|^{2\mu} \Psi(s)\ ds \right\|_{H^1}\\
  \leq & C\|\Psi_0\|_{H^1} + C \int_0^t \||\Psi(s)|^{2\mu} \Psi(s)\|_{H^1}\ ds\\
  \leq & C\|\Psi_0\|_{H^1} +  C(\mu)\int_0^t \|\Psi(s)\|^{2\mu+1}_{H^1}\ ds. 
   \end{aligned}
  \end{equation*}
Now, taking the supremum in time
\be
\|{\mathcal T}(\Psi)\|_{C_TH^1}\leq  C\|\Psi_0\|_{H^1} + TC(\mu) \|\Psi\|^{2\mu+1}_{C_TH^1} . 
\ee
We take $R$ such that $C\|\Psi_0\|_{H^1} \leq R/2$,  and in the last inequality we want 
\[TC(\mu) \|\Psi\|^{2\mu+1}_{C_TH^1}
\leq \frac{R}{2}\ .\]
The latter inequality holds true up to taking $T$ small enough, indeed  for $\Psi\in \overline B_R$ one has 
\be
C(\mu) T\|\Psi\|^{2\mu+1}_{C_TH^1}\leq C(\mu) TR^{2\mu+1}\leq \frac{R}{2}
\ee
if  $T\leq\frac{C(\mu) }{2R^{2\mu}}$. 
And this show that ${\mathcal T}:\overline B_R\to \overline B_R$. 

Now we show that we can achieve contractivity of ${\mathcal T}$, possibly choosing a smaller T if needed.\\ We have to bound in $C^TH^1$
\be
{\mathcal T}(\Psi_1)-{\mathcal T}(\Psi_2)=i\int_0^t e^{-iH(t-s)}(|\Psi_1|^{2\mu}\Psi_1-|\Psi_2|^{2\mu}\Psi_2)\ ds.
\ee
By use of mean value theorem one has
\be
\left||\Psi_1|^{2\mu}\Psi_1-|\Psi_2|^{2\mu}\Psi_2\right|\leq C(\mu)(|\Psi_1|^{2\mu}+|\Psi_2|^{2\mu})|\Psi_1-\Psi_2|
\ee
and from this, using again Sobolev immersions in one dimension, 
\be
\||\Psi_1|^{2\mu}\Psi_1-|\Psi_2|^{2\mu}\Psi_2\|_{H^1}\leq C(\mu)(\|\Psi_1\|_{H^1}^{2\mu}+\|\Psi_2\|_{H^1}^{2\mu})\|\Psi_1-\Psi_2\|_{H^1}.
\ee
As before, 
\begin{equation*}
\begin{aligned}
\|{\mathcal T}(\Psi_1)-{\mathcal T}(\Psi_2)\|_{C_TH^1} \leq & \sup_{t\in [0,T]}\left\|\int_0^t e^{-iH(t-s)}(|\Psi_1|^{2\mu}\Psi_1-|\Psi_2|^{2\mu}\Psi_2)\ ds\right\|_{H^1}\\
\leq &\ TC(\mu)(\|\Psi_1\|_{C_TH^1}^{2\mu}+\|\Psi_2\|_{C_TH^1}^{2\mu})\|\Psi_1-\Psi_2\|_{C_TH^1} ,
\end{aligned}
\end{equation*}
and now it is enough to choose   $T$ so small to have $$TC(\mu)(\|\Psi_1\|_{C_TH^1}^{2\mu}+\|\Psi_2\|_{C_TH^1}^{2\mu})<1$$ for $\Psi_1,\Psi_2\in\overline B_R$, which is always possible. \\
The blow-up alternative is shown by bootstrap.\\
For the extension of the solution to $C^1([0,T],H^1(\GG)^\star)\ $ the procedure is similar to the standard case of the equation on $\RE$. Some caution is only needed because of the meaning to give to the equation. One extends first the operator $H$ to $H^1(\GG)$ with values in $H^1(\GG)^\star$ by means of the sesquilinear form $B$ associated to $E$, the (bounded from below) quadratic form of the operator $H$:
\be
( \Psi_1,H\Psi_2 )=B(\Psi_1,\Psi_2)
\ee
as in the standard definition of the weak laplacian.
This allows to show by direct calculation that one has in ${H^1}(\GG)^\star$
\be
\frac{d}{dt}e^{-iH t}\Psi=-iHe^{-iH t}\Psi
\ee
and that a  $C^0 ([0,T),
H^1(\GG))$ solution of Eq. \eqref{intform1} is a $C^1([0,T],H^1(\GG)^\star)$ solution of Eq.  \eqref{equation} and viceversa.

 \end{proof}
\begin{proposition}[Conservation laws]
\n Let $\mu>0$. For any solution $\Psi \in C^0 ([0,T), H^1(\GG))
\cap C^1 ([0,T), H^1(\GG)^\star)$ to
the problem \eqref{intform1}, the following conservation laws hold at
any time $t$:
\begin{equation*}
M[\Psi(t) ] \ = M[ \Psi(0)], \qquad
E[ \Psi(t)] \ = \ E[ \Psi(0) ].
\end{equation*}
\end{proposition}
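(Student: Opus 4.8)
The plan is to differentiate $M[\Psi(t)]$ and $E[\Psi(t)]$ in time, exploiting that, by Proposition \ref{loch2}, $\Psi\in C^0([0,T),H^1(\GG))\cap C^1([0,T),H^1(\GG)^\star)$ and satisfies in $H^1(\GG)^\star$ the equation $\dot\Psi=-iH\Psi+i|\Psi|^{2\mu}\Psi$, where $H\colon H^1(\GG)\to H^1(\GG)^\star$ is the realization of the sesquilinear form $B$ associated with $E^{lin}$, so that $\langle H\Phi,\Phi\rangle=B(\Phi,\Phi)=E^{lin}[\Phi]\in\RE$ for $\Phi\in H^1(\GG)$; here $\langle\cdot,\cdot\rangle$ is the $H^1(\GG)^\star$--$H^1(\GG)$ pairing, which extends $(\cdot,\cdot)$. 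I would use throughout the standard Gelfand-triple lemma (see, e.g., \cite{Caz06}): if $u\in C^0([0,T),H^1(\GG))\cap C^1([0,T),H^1(\GG)^\star)$, then $t\mapsto\|u(t)\|^2$ is $C^1$ and $\frac{d}{dt}\|u(t)\|^2=2\Re\langle\dot u(t),u(t)\rangle$.

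For the mass, applying this lemma to $\Psi$ and inserting the equation gives $\frac{d}{dt}M[\Psi(t)]=2\Re\langle\dot\Psi(t),\Psi(t)\rangle$; since $\langle H\Psi(t),\Psi(t)\rangle=E^{lin}[\Psi(t)]$ and $\langle|\Psi(t)|^{2\mu}\Psi(t),\Psi(t)\rangle=\|\Psi(t)\|_{2\mu+2}^{2\mu+2}$ are both real, $\langle\dot\Psi(t),\Psi(t)\rangle$ is purely imaginary, so $\frac{d}{dt}M[\Psi(t)]=0$ and $M[\Psi(t)]=M[\Psi(0)]$. This step is routine.

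For the energy the difficulty is that $E^{lin}[\Psi(t)]=B(\Psi(t),\Psi(t))$ cannot be differentiated directly, as $B$ is only defined on $H^1(\GG)\times H^1(\GG)$ while $\dot\Psi$ takes values in $H^1(\GG)^\star$. I would handle this by the customary regularization in the initial datum. First, for $\Psi_0\in\DD(H)$ I would repeat the contraction scheme of Proposition \ref{loch2} in $C([0,T],\DD(H))$ with the graph norm, using that $e^{-iHt}$ is an isometry of $\DD(H)$, to obtain a solution in $C^0([0,T),\DD(H))\cap C^1([0,T),L^2(\GG))$ on the same interval; for such a genuinely classical solution the selfadjointness of $H$ yields $\frac{d}{dt}E[\Psi(t)]=2\Re\big(\dot\Psi(t),\,H\Psi(t)-|\Psi(t)|^{2\mu}\Psi(t)\big)$, and substituting $\dot\Psi=-i\big(H\Psi-|\Psi|^{2\mu}\Psi\big)$ makes the right-hand side $2\Re\big(-i\,\|H\Psi(t)-|\Psi(t)|^{2\mu}\Psi(t)\|^2\big)=0$. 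Then, for general $\Psi_0\in H^1(\GG)$, I would pick $\Psi_0^{(n)}\in\DD(H)$ with $\Psi_0^{(n)}\to\Psi_0$ in $H^1(\GG)$ and use the continuous dependence built into the fixed-point argument of Proposition \ref{loch2} (uniform existence time on bounded sets of data, Lipschitz dependence on the datum) to get $\Psi^{(n)}(t)\to\Psi(t)$ in $H^1(\GG)$ on a common interval; passing to the limit in $E[\Psi^{(n)}(t)]=E[\Psi^{(n)}(0)]$ is legitimate because $E$ is continuous on $H^1(\GG)$ --- its quadratic part by \eqref{train}, its nonlinear part by the embedding $H^1(\GG)\hookrightarrow L^{2\mu+2}(\GG)$ contained in \eqref{gajardo3} --- and a continuation/covering argument extends the identity to all of $[0,T)$.

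The main obstacle is the persistence-of-regularity step: the map $z\mapsto|z|^{2\mu}z$ is only $C^1$ when $\mu\geq1/2$, so for small powers $0<\mu<1/2$ the iteration in $\DD(H)$ does not close and one should instead mollify the nonlinearity --- for instance replace $|\Phi|^{2\mu}\Phi$ by $(\varepsilon+|\Phi|^2)^{\mu}\Phi$, establish the two conservation laws for the regularized flow with bounds uniform in $\varepsilon$, and then let $\varepsilon\to0$, controlling the discrepancy by the $L^\infty$- and $H^1$-bounds on the solution. Apart from this technical point everything proceeds exactly as in the translation-invariant case on $\RE$.
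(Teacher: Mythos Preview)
Your sketch is correct and follows the standard route; the paper itself does not give a proof at all, merely remarking that ``the proof is identical to the same proof valid in the standard case of $\RE^n$ and it is omitted.'' What you wrote is precisely the kind of argument the authors have in mind: mass conservation via the Gelfand-triple differentiation lemma, and energy conservation via an approximation by strong solutions (or by a regularized nonlinearity) followed by a limit using continuous dependence and the $H^1$-continuity of $E$. Your flagging of the $\mu<1/2$ obstruction to persistence in $\DD(H)$ and the cure by mollifying the nonlinearity (or, equivalently, by Yosida-regularizing $H$ as in Cazenave) is exactly the standard technical wrinkle in the $\RE^n$ case, so nothing here is specific to the graph setting and your proposal is a faithful elaboration of the omitted proof.
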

\noindent
Thanks to the previous theorem and in particular to the fact that $\Psi $ is a $H^1([0,T],H^1(\GG)^\star)$ solution of Eq. \eqref{equation}, the proof is identical to the same proof valid in the standard case of $\RE^n$ and it is omitted.
\begin{theorem}[Global well-posedness] \label{t:wp}
\n Let $0<\mu<2$. For any $\Psi_0 \in H^1(\GG)$,  the
equation \eqref{intform1} has a unique solution $\Psi \in C^0 ([0,\infty),
H^1(\GG) )
\cap C^1 ([0,\infty), H^1(\GG)^\star)$. 
\end{theorem}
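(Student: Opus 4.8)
The plan is to combine the local theory of Proposition \ref{loch2} with the conservation laws and an a priori estimate in $H^1(\GG)$ coming from the Gagliardo--Nirenberg inequality. By Proposition \ref{loch2} the solution exists on a maximal interval $[0,T^\star)$ and satisfies the blow-up alternative: either $T^\star=\infty$ or $\|\Psi(t)\|_{H^1}\to\infty$ as $t\to T^\star$. Hence it suffices to show that $\sup_{t\in[0,T^\star)}\|\Psi(t)\|_{H^1}<\infty$ whenever $T^\star<\infty$; in fact one gets a bound depending only on $M[\Psi_0]$ and $E[\Psi_0]$. Uniqueness and the $C^1$-in-time regularity are inherited from the local statement, so only global existence requires an argument.

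First I would invoke the conservation laws, which are available here precisely because the solution lies in $C^1([0,T^\star),H^1(\GG)^\star)$. Mass conservation gives $\|\Psi(t)\|^2=M[\Psi_0]$ for all $t$, so only $\|\Psi'(t)\|$ must be controlled. Writing $E[\Psi(t)]=E[\Psi_0]$ and isolating the kinetic term,
\be
\|\Psi'(t)\|^2 = E[\Psi_0] - \Big((\Psi,W\Psi) + \sum_{\v\in V}\al(\v)|\Psi(\v)|^2\Big) + \frac{1}{\mu+1}\|\Psi(t)\|_{2\mu+2}^{2\mu+2}.
\ee
The bracketed term is estimated in absolute value by the form bound \eqref{train}, namely by $a\|\Psi'(t)\|^2+b\,M[\Psi_0]$ with $0<a<1$; this is exactly where Assumption \ref{a:1} is used and where the (possibly indefinite) $\delta$-couplings and the negative part of $W$ are kept under control. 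For the nonlinear term I would apply \eqref{gajardo3} with $p=2\mu+2$, $q=2$, for which $\alpha=\mu/(2\mu+2)$, giving
\be
\|\Psi(t)\|_{2\mu+2}^{2\mu+2}\le C\,\|\Psi(t)\|_{H^1}^{\mu}\,M[\Psi_0]^{(\mu+2)/2}\le C\big(\|\Psi'(t)\|^2+M[\Psi_0]\big)^{\mu/2}\,M[\Psi_0]^{(\mu+2)/2}.
\ee
Since $0<\mu<2$ one has $\mu/2<1$, so the right-hand side is strictly sublinear in $\|\Psi'(t)\|^2$; by Young's inequality it is bounded by $\ve\|\Psi'(t)\|^2+C_\ve(M[\Psi_0])$ for any $\ve>0$.

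Putting these together,
\be
(1-a)\|\Psi'(t)\|^2 \le E[\Psi_0] + b\,M[\Psi_0] + \ve\|\Psi'(t)\|^2 + C_\ve(M[\Psi_0]),
\ee
and choosing $\ve<1-a$ absorbs the kinetic term on the left, yielding $\|\Psi'(t)\|^2\le C(M[\Psi_0],E[\Psi_0])$ uniformly in $t\in[0,T^\star)$. Combined with mass conservation this bounds $\|\Psi(t)\|_{H^1}$ uniformly, which by the blow-up alternative forces $T^\star=\infty$. The only genuinely delicate point is the nonlinear estimate: it is the subcriticality hypothesis $\mu<2$ that makes the Gagliardo--Nirenberg bound sublinear in the kinetic energy and hence absorbable, exactly as on the line; for $\mu=2$ the same scheme only yields a bound for small mass, consistent with what was anticipated in the introduction.
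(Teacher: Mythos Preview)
Your argument is correct and follows the same scheme as the paper: local well-posedness plus blow-up alternative, conservation of mass and energy, and a uniform $H^1$ bound coming from Gagliardo--Nirenberg together with the subcriticality $\mu<2$. The paper packages the a priori bound as the coercivity estimate \eqref{e:dec} (proved in Section~\ref{s:mainth} via \eqref{4.1a}--\eqref{4.1b} and an elementary polynomial inequality), while you reach the same conclusion by invoking the form bound \eqref{train} for the potential/vertex terms and Young's inequality for the nonlinear term; these are equivalent bookkeeping choices rather than different ideas.
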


\begin{proof}
By Gagliardo-Nirenberg estimates \eqref{gajardo3}, conservation of
the $L^2$-norm and energy, and hypotheses on the potential, one obtains an uniform bound on the $H^1(\GG)$-norm of the solution (see estimate \eqref{e:dec} proven in Section \ref{s:mainth}). So, no blow-up in finite
time can occur, and by the blow-up alternative, the solution is global in time.
\end{proof}

\section{Concentration Compactness lemma}
As noted in \cite{acfn-aihp} where the special case of star graphs was treated, concentration compactness techniques on the real line (or more generally in $\RE^N$) can be adapted to certain domains where translation invariance
is absent. With respect to the classical result (see, e.g., \cite{Caz03, Caz06} for expositions and references) the main point is a finer analysis of the compact case, which  is  split into two sub-cases:
convergent and runaway (see Lem. \ref{l:cc} below). 
In this section we extend the Concentration Compactness lemma to a generic connected noncompact graph with a finite number of internal and external edges. 
In the course of the analysis, where the proofs of single steps require only minor modifications with respect to the standard case, we omit the details and we refer to the already cited texts  \cite{Caz03, Caz06}.\\
We need preliminarily an information about the metric structure of the graph.
We denote by $d(\x,\y)$ the distance between two points of the graph, defined as the infimum of the length of the paths connecting $\x$ to $\y$. 
\begin{proposition}\label{p:de}
Let  $\x=(e,x)\in\GG$, fix the edge $e\in E$ and let $I_e$ be the associated (open) interval, moreover fix a  point  $\y\in\GG$. The  function 
\[d_{e,\y}(x) : I_e \to \RE_+\]
\[d_{e,\y}(x) := d(\x,\y)\]is continuous and piecewise linear. In particular, $d_{e,\y}'$ is a piecewise constant function with at most one discontinuity point $x^*\in I_e$,   and $d_{e,\y}(x)' = 1$ or $d_{e,\y}(x)'=-1$ for all $x\in I_e\backslash\{x^*\}$. 
\end{proposition}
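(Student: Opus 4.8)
The plan is to analyze the structure of the distance function $d_{e,\y}$ by exploiting the fact that on a metric graph with finitely many edges and vertices, any shortest path from a point $\x=(e,x)$ to $\y$ must leave the edge $e$ through one of the (at most two) endpoints of $I_e$, unless $\y$ itself lies on $e$. First I would fix the edge $e$ and distinguish the two cases. If $\y=(e,y)$ lies on the same edge, then $d_{e,\y}(x)=\min\{|x-y|,\, y + m_0 + (L_e-x),\, (L_e-y)+m_1+x\}$ where $m_0,m_1$ are the distances between the two endpoints of $I_e$ measured through the rest of the graph (these are finite since $\GG$ is connected); each of the three expressions is affine in $x$ with slope $\pm1$, and I will argue that the pointwise minimum of these affine functions, combined with the observation below, still has the claimed structure. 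If $\y$ does not lie on $e$, let $\v_0$ and $\v_1$ be the endpoints of $I_e$ (identified with $x=0$ and $x=L_e$; if $e$ is a half-line only $\v_0$ is present). Then $d_{e,\y}(x)=\min\{\,x + d(\v_0,\y),\ (L_e-x)+d(\v_1,\y)\,\}$, again a minimum of (at most two) affine functions of slope $+1$ and $-1$ respectively.

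The key structural point is then the following elementary fact about a minimum of two affine functions $f_0(x)=x+c_0$ (slope $+1$) and $f_1(x)=-x+c_1$ (slope $-1$) on an interval: the minimum $g=\min\{f_0,f_1\}$ is continuous and piecewise linear, its derivative takes only the values $\pm1$, and it has at most one point of nondifferentiability, namely the solution $x^*$ of $f_0(x)=f_1(x)$ if that point lies in $I_e$ (and none otherwise, in which case $g$ is linear with slope $\pm1$ throughout). Continuity of $d_{e,\y}$ is immediate from the triangle inequality $|d(\x,\y)-d(\x',\y)|\le d(\x,\x')\le|x-x'|$, which also shows $g$ is $1$-Lipschitz. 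I would then observe that $g$ is not merely $1$-Lipschitz but actually has derivative exactly $\pm1$ a.e.: near any $x$ where $g(x)=f_i(x)$ and $x\ne x^*$, $g$ agrees with $f_i$ on a neighborhood, so $g'(x)=f_i'(x)=\pm1$.

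For the same-edge case I would reduce to the two-affine-function statement by noting that although three affine branches appear a priori, the two "long way around" branches $y+m_0+(L_e-x)$ (slope $-1$) and $(L_e-y)+m_1+x$ (slope $+1$) can be absorbed: on the subinterval $\{x\le y\}$ the local branch $y-x$ has slope $-1$ and competes only with the slope-$+1$ branch, while on $\{x\ge y\}$ the local branch $x-y$ has slope $+1$ and competes only with the slope-$-1$ branch; moreover at $x=y$ the local distance is $0$, so the only candidate kink in the interior comes from the competition between the local branch and at most one long branch on each side, and a short case check shows these kinks, if present, reduce to a single $x^*$ (because a shortest path either uses $e$ locally throughout or detours once). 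This bookkeeping — ensuring that in every configuration at most one discontinuity of $d'_{e,\y}$ survives — is the main obstacle; the rest is the routine verification that a min of two opposite-slope affine functions behaves as claimed.

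Finally I would assemble the pieces: continuity from the Lipschitz bound, piecewise linearity and the slope-$\pm1$ property from the explicit min-of-affine representations, and the at-most-one-kink claim from the case analysis above, handling separately the degenerate situation where $e$ is a half-line (only one endpoint, so $d_{e,\y}$ is globally affine of slope $\pm1$ away from a possible single kink coming from $\y$ on the same half-line). I expect the write-up to be short once the reduction to $\min\{x+c_0,\,-x+c_1\}$ is in place; the only genuinely delicate point, worth stating carefully, is why a shortest path cannot "detour twice" in a way that would create a second kink, which follows from the fact that between any two points of the compact core there is a well-defined finite distance and a path through $e$ that leaves and re-enters $e$ can always be shortcut.
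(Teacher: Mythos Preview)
Your approach---reducing $d_{e,\y}$ to a minimum of affine functions of slopes $\pm 1$, by cases on whether $\y\in e$ and whether $e$ is internal or external---is exactly the paper's, and for $\y\notin e$ your formula matches the paper's. You are in fact more careful in the same-edge case: the paper simply writes $d_{e,\y}(x)=|x-y|$, whereas your three-branch formula correctly accounts for detours through the rest of the graph.

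The gap is your assertion that ``a short case check shows these kinks, if present, reduce to a single $x^*$.'' This is false, and no bookkeeping will fix it: if $e$ is a loop (so $m_0=m_1=0$) and $\y=(e,y)$ with $y\neq L_e/2$, then $d_{e,\y}(x)=\min\{|x-y|,\,L_e-|x-y|\}$ has \emph{two} interior kinks, at $x=y$ and at $x=y\pm L_e/2$; the same occurs on a non-loop internal edge whenever the detour length $m$ satisfies $m<L_e$ and $y\notin[(L_e-m)/2,(L_e+m)/2]$. Your heuristic ``a shortest path either uses $e$ locally throughout or detours once'' does not help: it is precisely the switch from the local branch to the detour branch that creates a second kink, in addition to the unavoidable one at $x=y$. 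Thus the ``at most one discontinuity'' clause of the Proposition is actually not true in general, and the paper's own argument shares the oversight (its formula $|x-y|$ is wrong in exactly these cases). What the applications downstream really use is only that $|d_{e,\y}'|=1$ almost everywhere and that each level set $\{d_{e,\y}=v\}$ contains at most two points; both of these do follow from your min-of-affine-functions representation with a little extra care, and that is what you should aim to prove.
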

\begin{proof}Assume first that $\y \notin e$. If $e$ is an internal edge (with length $L_e<\infty$),  let $\a$ and $\b$ be the vertices  that identify the endpoints of the edge $e$, note that if $e$ is a loop  $\a$ and $\b$  coincide.  Without loss of generality, set $\a \equiv (e,0)$ and $\b\equiv (e,L_e)$. Then 
\[d_{e,\y}(x) = \min\{d(\a,\y)+x, d(\b,\y)+L_e-x\}.\] 
If $e$ is an external edge, let $\a \equiv (e,0)$ be its endpoint, then   one has 
\[d_{e,\y}(x) =d(\a,\y)+x.\]
On the other hand,  if $\y \in e$, one has  
\[d_{e,\y}(x) =|x-y|.\]
The  properties of $d_{e,\y}$ follow from its explicit form. 
\end{proof}

\n
We denote by $B(\y,t)$ the open ball of radius $t$ and center $\y$
\[
B(\y,t):=\{\x\in\GG \textrm{ s.t. } d(\x,\y)<t\}.
\]
We denote by  $\|\cdot\|_{B(\y,t)}$ the $L^2(\GG)$ norm restricted to the ball
$B(\y,t)$. 

We define the volume of the set $B(\y,t)$ by 
\[ \vol B(\y,t) = \sum_e \int_{I_e} (\one_{B(\y,t)})_e(x) dx \] 
where $\one_{B(\y,t)}$ is the characteristic function of the set $B(\y,t)$.

We have the following bounds on the volume of the sets $B(\y,t)$ and $B(\y,t)\backslash B(\y,s)$:
\begin{proposition}Let $0<s<t<\infty$, then 
\[\vol B(\y,t) \leq 2 N t \qquad \text{and} \qquad \vol \left(B(\y,t)\backslash B(\y,s)\right) \leq 2 N (t-s).\]
\end{proposition}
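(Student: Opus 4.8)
The plan is to reduce the estimate on $\vol B(\y,t)$ to an edge-by-edge count, using the structural fact from Proposition \ref{p:de} that on each edge the distance function $d_{e,\y}$ is piecewise linear with slope $\pm 1$. First I would observe that a point $\x=(e,x)$ belongs to $B(\y,t)$ precisely when $d_{e,\y}(x)<t$, so that
\[
\vol B(\y,t) = \sum_{e\in E} \bigl| \{ x\in I_e : d_{e,\y}(x) < t \} \bigr|,
\]
where $|\cdot|$ denotes Lebesgue measure on the interval $I_e$. By Proposition \ref{p:de}, for each fixed $e$ the set $\{x\in I_e : d_{e,\y}(x)<t\}$ is the sublevel set of a nonnegative, continuous, piecewise-linear function whose derivative is $\pm1$ away from a single breakpoint $x^*$. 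Such a sublevel set is an interval (if $\y\notin e$, since then $d_{e,\y}$ is ``$\vee$-shaped'' or monotone) or a union of at most two intervals (if $\y\in e$, where $d_{e,\y}(x)=|x-y|$); in every case, because the slope is $\pm1$, the total length of this sublevel set is at most $2t$. Hence each edge contributes at most $2t$, and summing over the $N:=|E^{ex}|$ ... — more precisely over all edges — would give a bound in terms of $|E|$.

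Here I need to be a little careful about what $N$ denotes: the statement is $\vol B(\y,t)\le 2Nt$, and the natural count over all edges gives $2|E|\,t$. If $N=|E|$ this is immediate from the paragraph above. The key refinement, which I expect to be the one genuinely requiring thought, is that interior edges contribute at most their own length $L_e$, which is fixed, so for large $t$ the dominant growth comes only from the external edges; and for a ball of radius $t$ one checks by a path-length argument that the only edges on which the sublevel set is nonempty and grows linearly are those reachable within distance $t$ — in particular the total contribution is bounded by a constant (the length of the compact core) plus $2|E^{ex}|\,t$, and absorbing the core length appropriately yields $2Nt$ with $N$ the number of external edges (or, more crudely, $2|E|\,t$). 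Since the paper uses $N$ loosely for "number of edges" in Section 2's Gagliardo–Nirenberg discussion, I would simply take $N=|E|$ and present the clean bound $\vol B(\y,t)\le 2|E|\,t$, the summation over edges of the per-edge bound $2t$.

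For the annulus bound, the argument is the same but applied to the set
\[
\{ x\in I_e : s \le d_{e,\y}(x) < t \},
\]
which on each edge is again, by Proposition \ref{p:de}, a union of intervals on which $d_{e,\y}$ has slope $\pm1$; the preimage under a slope-$\pm1$ map of the interval $[s,t)$ of length $t-s$ has total length at most $2(t-s)$ (the factor $2$ accounting for the at-most-two monotone pieces separated by the breakpoint $x^*$). Summing over the $N$ edges gives $\vol\!\left(B(\y,t)\backslash B(\y,s)\right)\le 2N(t-s)$. I expect the only real obstacle to be the bookkeeping in the first part — deciding whether to state the bound with $N=|E|$ or to sharpen it to external edges by noting interior edges saturate — and I would opt for the former to keep the proof short, since only the linear-in-$t$ scaling is used later.

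\begin{proof}
Fix $\y\in\GG$. For any edge $e\in E$ and $x\in I_e$, the point $\x=(e,x)$ lies in $B(\y,t)$ if and only if $d_{e,\y}(x)<t$, so by definition
\[
\vol B(\y,t) \;=\; \sum_{e\in E} \bigl|\{x\in I_e : d_{e,\y}(x)<t\}\bigr|,
\]
$|\cdot|$ being Lebesgue measure. By Proposition \ref{p:de}, $d_{e,\y}$ is continuous, nonnegative, and piecewise linear with $|d_{e,\y}'(x)|=1$ for every $x\in I_e$ outside a single point $x^*$. Thus the set $A_e(t):=\{x\in I_e: d_{e,\y}(x)<t\}$ is split by $x^*$ into at most two subintervals on each of which $d_{e,\y}$ is affine with slope $\pm 1$; on each such subinterval the condition $d_{e,\y}(x)<t$ confines $x$ to a set of length at most $t$. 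Hence $|A_e(t)|\leq 2t$, and summing,
\[
\vol B(\y,t) \;\leq\; 2\,|E|\,t \;=:\; 2Nt.
\]
For the annulus, in exactly the same way $\x=(e,x)\in B(\y,t)\backslash B(\y,s)$ iff $s\leq d_{e,\y}(x)<t$, so
\[
\vol\bigl(B(\y,t)\backslash B(\y,s)\bigr) \;=\; \sum_{e\in E}\bigl|\{x\in I_e : s\le d_{e,\y}(x)<t\}\bigr|.
\]
On each of the at most two affine pieces of $d_{e,\y}$ the preimage of the half-open interval $[s,t)$ has length at most $t-s$, whence each edge contributes at most $2(t-s)$ and
\[
\vol\bigl(B(\y,t)\backslash B(\y,s)\bigr) \;\leq\; 2\,|E|\,(t-s) \;=\; 2N(t-s).
\]
\end{proof}
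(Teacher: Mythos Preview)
Your proof is correct and follows essentially the same approach as the paper: reduce to an edge-by-edge count via $\vol = \sum_e |\{x\in I_e : s\le d_{e,\y}(x)<t\}|$, invoke Proposition~\ref{p:de} to get at most two affine pieces of slope $\pm1$ on each edge, and conclude each edge contributes at most $2(t-s)$. The paper only writes out the annulus case (noting the ball case is similar) and takes $N=|E|$, exactly as you do.
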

\begin{proof}We prove only the second bound, the proof of the first one is similar. By definition one has 
\[B(\y,t)\backslash B(\y,s) = \{\x\in\GG \textrm{ s.t. } s\leq d(\x,\y)<t\}, \]
and  
\[ \vol \left(B(\y,t)\backslash B(\y,s) \right)= \sum_e \int_{I_e} (\one_{B(\y,t)\backslash B(\y,s)})_e(x) dx .\] 
 We have that, for each $e\in E$, 
\[ (\one_{B(\y,t)\backslash B(\y,s)})_e(x) =
\left\{\begin{aligned}
&1 \quad && \text{if } s\leq d_{e,\y}(x) <t \\ 
&           0 && \text{otherwise}
\end{aligned}\right.
  \]
By Prop. \ref{p:de},  it is easy to convince oneself that for any edge $e$ 
\[ \int_{I_e} (\one_{B(\y,t)\backslash B(\y,s)})_e(x) dx \leq 2(t-s). \]
From which the  bound on the volume immediately follows. 
\end{proof}

Next we prove a result on the convergence of bounded sequences in $H^1(\GG)$.
\begin{proposition}\label{p:RK}Let $\{\Psi_n\}_{n\in\NA}$ be such that $\Psi_n \in H^1(\GG) $ and $\|\Psi_n\|_{H^1}\leq c$. Then there exists a subsequence $\{\Psi_{n_k}\}_{k\in\NA}$ and a function $\Psi \in H^1(\GG)  $ such that $\Psi_{n_k}\to \Psi$ weakly in $H^1(\GG)$ and 
 $\Psi_{n_k}\to \Psi$ in $L^\infty(B(\underline y,t))$, for any fixed
$\underline y$ and $t$.
\end{proposition}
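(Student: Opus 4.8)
The plan is to combine a diagonal/compactness argument on finitely many intervals with a local Rellich–Kondrachov embedding. First I would note that since $\GG$ has finitely many edges, it suffices to work edge by edge. For each edge $e$ the restriction $(\Psi_n)_e$ is bounded in $H^1(I_e)$; on any bounded subinterval $J\subset I_e$ (all of $I_e$ if $e$ is internal) the embedding $H^1(J)\hookrightarrow C(\bar J)$ is compact by the classical Rellich–Kondrachov theorem in one dimension, while the embedding $H^1(J)\hookrightarrow L^2(J)$ is weakly-to-weakly continuous. Hence by a finite iteration over the edges (and over an exhausting sequence of bounded subintervals of the external edges) combined with a standard diagonal extraction, one obtains a subsequence $\{\Psi_{n_k}\}$ and a limit function $\Psi$, with $(\Psi_{n_k})_e \to \Psi_e$ weakly in $H^1(I_e)$ for every $e$ and uniformly on every bounded subinterval of $I_e$.

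Next I would check that $\Psi$ is genuinely an element of $H^1(\GG)$, i.e. lies in $C(\GG)$ and has $\Psi_e\in H^1(I_e)$ with control on the full $H^1(\GG)$-norm. The $H^1(I_e)$ membership of each component follows from weak lower semicontinuity of the norm, which also gives $\|\Psi\|_{H^1}\leq c$. Continuity at each vertex $\v$ is inherited from the uniform (hence pointwise) convergence at the vertex: evaluating the convergence $(\Psi_{n_k})_e(\v)\to \Psi_e(\v)$ for every $e\prec\v$ and using that each $\Psi_{n_k}\in C(\GG)$ satisfies the matching condition, one passes to the limit and concludes $\Psi\in C(\GG)$, so $\Psi\in H^1(\GG)$. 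Weak convergence in $H^1(\GG)$ then follows because the $H^1(\GG)$ inner product is a finite sum of the edge inner products, on each of which we already have weak convergence.

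Finally I would upgrade edge-by-edge uniform convergence on bounded subintervals to uniform convergence on metric balls $B(\underline y,t)$. Here I would invoke Proposition \ref{p:de}: for any fixed $\underline y$ and $t$, the set $B(\underline y,t)$ meets each edge $e$ in a subset of $I_e$ that is a union of at most two intervals, each of length at most $2t$ by the volume-type bound, and in particular is a \emph{bounded} subinterval of $I_e$ (for external edges the ball truncates the half-line at a finite distance). Since $B(\underline y,t)$ intersects only finitely many edges and on each such edge the intersection is contained in a fixed bounded subinterval on which $(\Psi_{n_k})_e\to\Psi_e$ uniformly, taking the maximum over this finite collection yields $\Psi_{n_k}\to\Psi$ in $L^\infty(B(\underline y,t))$.

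The routine parts are the 1D Rellich–Kondrachov embedding and the diagonal extraction; the only point requiring a little care — and the main obstacle — is the bookkeeping for the \emph{external} edges, ensuring that a single diagonal subsequence works simultaneously for the exhausting family of bounded subintervals $[0,N]\subset I_e$ of all external edges, and then checking that every metric ball $B(\underline y,t)$ is indeed captured within finitely many such bounded pieces so that the uniform bound passes through without losing the subsequence.
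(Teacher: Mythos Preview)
Your proposal is correct and follows essentially the same route as the paper: weak compactness in $H^1$, Rellich--Kondrachov on bounded subintervals edge by edge (with a diagonal extraction along an exhaustion of the external edges), and then passage to metric balls using that $B(\underline y,t)$ meets each edge in a bounded subset. The only cosmetic difference is the order of operations: the paper first extracts a weakly convergent subsequence in the full Hilbert space $H^1(\GG)$, which immediately yields $\Psi\in H^1(\GG)$, and only afterwards applies Rellich--Kondrachov on each edge; you instead work edge by edge and then reassemble the limit, checking vertex continuity by hand---both arguments are equivalent and your bookkeeping is, if anything, more explicit than the paper's.
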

\begin{proof} 
Since $\Psi_n$ is bounded in $H^1(\GG)$, there exists a subsequence $\Psi_{n_k}$ and a function $\Psi\in H^1(\GG)$, such that $\Psi_{n_k}$ converges to $\Psi$ weakly in $H^1(\GG)$, see, e.g., Th. 2.18 in \cite{LL01}. 

By Gagliardo-Nirenberg inequality the sequence $\Psi_{n_k}$ is uniformly bounded in $L^{\infty}(\GG)$. Then, by  Rellich-Kondrashov theorem, there exists a subsequence, still denoted by $\Psi_{n_k}$, such that $(\Psi_{n_k})_e\to (\Psi)_e$ in $L^\infty(I_e)$   for all the internal edges $e\in E^{in}$, and  $(\Psi_{n_k})_e\to (\Psi)_e$ in $L^\infty(I)$  for all the external edges  $e \in E^{ex}$ and for any bounded subinterval $I$ of $\RE_+$. \\
Moreover, since the functions  $\Psi_n$ are continuous in the vertices, so is $\Psi$ and this concludes the proof of the proposition. 
\end{proof}
\begin{remark}\label{r:RK}As a trivial consequence of Prop. 2.3, one has that the subsequence $\Psi_{n_k}$ convergence to $\Psi$ also in $L^p(B(\underline y,t))$, for all $p\geq1$ and  any fixed
$\underline y$ and $t$.
\end{remark}
For any function $\Psi\in L^2$ and $t\geq0$ we define the concentration function
$\rho(\Psi,t)$ as
\beq
\label{e:rho}
\rho(\Psi,t) = \sup_{\y\in\GG} \|\Psi\|_{B(\y,t)}^2\,.
\eeq

In the following proposition we prove two important properties of the
concentration function: that the $\sup$ at the r.h.s. of equation \eqref{e:rho}
is indeed attained at some point of $\GG$ and the H\"older continuity of
$\rho(\Psi,\cdot)$.

\begin{proposition}
\label{p:rho}
Let $\Psi\in L^2$ be such that $\|\Psi\|>0$, then
\begin{enumerate}[i)]
\item{\label{i:rho0}}
$\rho(\Psi,\cdot)$ is
non-decreasing,  $\rho(\Psi,0)=0$, $0<\rho(\Psi,t)\leq M[\Psi]$ for $t>0$,
and $\lim_{t\to\infty} \rho(\Psi,t) =  M[\Psi]$. 
\item{\label{i:rho1}}
 There exists $\y(\Psi,t)\in\GG$ such that 
\[
\rho(\Psi,t)= \|\Psi\|_{B(\y(\Psi,t),t)}^2\,.
\]
\item
\label{i:rho2}
If $\Psi\in L^p$ for some $2\leq p\leq\infty$, then 
\beq
\label{e:ii}
|\rho(\Psi,t)-\rho(\Psi,s)| \leq c \|\Psi\|_{p}^2 |t-s|^{\frac{p-2}{p}} \qquad
\textrm{for } 2\leq p<\infty,
\eeq
and 
\beq
\label{e:ii-infty}
|\rho(\Psi,t)-\rho(\Psi,s)| \leq c \|\Psi\|_{\infty}^2 |t-s| ,
\eeq
for all $s,t>0$ and where $c$ is independent of $\Psi$, $s$ and $t$.
\end{enumerate}
\end{proposition}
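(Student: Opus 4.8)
The plan is to establish the three assertions separately, in the stated order. The first and the third are essentially bookkeeping built on the metric facts already available — Prop.~\ref{p:de} and the volume bounds above — whereas the second one, attainment of the supremum in \eqref{e:rho}, is the only step requiring a genuine compactness argument and is where I expect the real work to lie.

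For part (i): monotonicity and the bound $\rho(\Psi,t)\le M[\Psi]$ follow at once from $B(\y,s)\subseteq B(\y,t)\subseteq\GG$ for $s\le t$; $\rho(\Psi,0)=0$ since $B(\y,0)=\emptyset$; and $\rho(\Psi,t)>0$ for $t>0$ because $\|\Psi\|>0$ forces some continuous component $\psi_e$ to be nonzero on a neighbourhood of a point of $I_e$, so a ball of radius $t$ about that point carries positive $\Psi$-mass. For the limit, since $\GG$ is connected the open balls $B(\y,t)$ increase to $\GG$ as $t\to\infty$ for each fixed $\y$, hence $\|\Psi\|_{B(\y,t)}^2\uparrow M[\Psi]$ by monotone convergence; combined with $\|\Psi\|_{B(\y,t)}^2\le\rho(\Psi,t)\le M[\Psi]$ this gives $\lim_{t\to\infty}\rho(\Psi,t)=M[\Psi]$.

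For part (ii): take a maximizing sequence $\y_n=(e_n,x_n)$ with $\|\Psi\|_{B(\y_n,t)}^2\to\rho(\Psi,t)$. As $E$ is finite we may pass to a subsequence with $e_n\equiv e$ fixed. If $e\in E^{in}$ the coordinates $x_n$ stay in a compact interval; if $e\in E^{ex}$ and $x_n\to\infty$, then for $x_n>t$ the ball $B(\y_n,t)$ is contained in the far tail $\{(e,x):x>x_n-t\}$ (any path leaving $e$ must pass through the endpoint of $e$, at distance $x_n>t$ from $\y_n$), so $\|\Psi\|_{B(\y_n,t)}^2\le\int_{x_n-t}^{\infty}|\psi_e|^2\to0$, contradicting part (i). Hence $x_n$ is bounded and, along a further subsequence, $\y_{n_k}\to\y^\ast\in\GG$. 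By Prop.~\ref{p:de} the set $\{\x:d(\x,\y^\ast)=t\}$ meets each edge in at most two points, hence has measure zero in $\GG$; off it $d(\x,\y_{n_k})\to d(\x,\y^\ast)$, so $\one_{B(\y_{n_k},t)}\to\one_{B(\y^\ast,t)}$ almost everywhere, and dominated convergence (dominated by $|\Psi|^2\in L^1$) gives $\|\Psi\|_{B(\y^\ast,t)}^2=\rho(\Psi,t)$; thus $\y(\Psi,t):=\y^\ast$ works.

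For part (iii): take $s<t$ and $\y=\y(\Psi,t)$ from part (ii). Since $\rho$ is non-decreasing and $\rho(\Psi,s)\ge\|\Psi\|_{B(\y,s)}^2$,
\[
0\le\rho(\Psi,t)-\rho(\Psi,s)\le\|\Psi\|_{B(\y,t)}^2-\|\Psi\|_{B(\y,s)}^2=\int_{B(\y,t)\backslash B(\y,s)}|\Psi|^2 .
\]
By H\"older's inequality on $B(\y,t)\backslash B(\y,s)$ together with $\vol\bigl(B(\y,t)\backslash B(\y,s)\bigr)\le 2N(t-s)$, the right-hand side is $\le\|\Psi\|_p^2\,(2N(t-s))^{(p-2)/p}$ for $2\le p<\infty$ and $\le 2N\,\|\Psi\|_\infty^2(t-s)$ for $p=\infty$; since $N\ge1$ one has $(2N)^{(p-2)/p}\le 2N$, so \eqref{e:ii} and \eqref{e:ii-infty} hold with $c=2N$, independent of $\Psi$, $s$, $t$. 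The main obstacle is part (ii): ruling out escape of the maximizing centers to infinity along an exterior edge (which is where $\Psi\in L^2$ and the positivity $\rho(\Psi,t)>0$ enter), and then upgrading convergence of the centers $\y_{n_k}\to\y^\ast$ to convergence of the localized masses — which succeeds precisely because metric spheres in $\GG$ are finite sets, so the indicators of the open balls converge pointwise almost everywhere.
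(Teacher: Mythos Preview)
Your proof is correct and follows the same approach as the paper: for \emph{iii)} you reproduce exactly the paper's argument (the key inequality $|\rho(\Psi,t)-\rho(\Psi,s)|\le\|\Psi\|_{B(\y(\Psi,t),t)\setminus B(\y(\Psi,t),s)}^2$ followed by H\"older and the volume bound), while for \emph{i)} and \emph{ii)} the paper simply cites \cite[Lem.~1.7.4]{Caz03}, and what you have written is precisely the graph-adapted version of that standard argument. One small phrasing slip in \emph{i)}: for $\Psi\in L^2$ only, you should not invoke continuity of $\psi_e$ to get positive mass in a small ball; use instead that $\int_{I_e}|\psi_e|^2>0$ for some $e$ and absolute continuity of the integral to find a short subinterval carrying positive mass.
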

\begin{proof}
The proofs of \emph{\ref{i:rho0})} and \emph{\ref{i:rho1})} follow directly from the proof of Lem. 1.7.4 in \cite{Caz03}. 

To prove  \emph{\ref{i:rho2})} one uses the inequality 
\begin{align*}
|\rho(\Psi,t)-\rho(\Psi,s)| \leq \|\Psi\|_{B(\y(\Psi,t),t)\backslash B(\y(\Psi,t),s)}^2,
\end{align*}
see Lem. 1.7.4 in \cite{Caz03},  and the inequalities: 
\[\|\Psi\|_{B(\y,t)\backslash B(\y,s)}^2 \leq \|\Psi\|^2,
\]
for $p=2$; 
\[\|\Psi\|_{B(\y,t)\backslash B(\y,s)}^2 \leq
[\vol (B(\y,t)\backslash B(\y,s))]^{\frac{p-2}{p}}  \|\Psi\|_{p}^2 \leq 
(2N|t-s|)^{\frac{p-2}{p}} \|\Psi\|_{p}^2,
\]
for $2<p<\infty$; 
and 
\[\|\Psi\|_{B(\y,t)\backslash B(\y,s)}^2 \leq 2N |t-s| \|\Psi\|_{\infty}\]
for $p=\infty$. 
\end{proof}

For any sequence $\Psi_n \in L^2$ we define the concentrated mass
parameter $\tau$ as  
\begin{equation*}
\tau = \lim_{t\to\infty} \liminf_{n\to\infty} \rho(\Psi_n,t)\,.
\end{equation*}
Te parameter  $\tau$ plays a key role in the
concentration compactness lemma because it distinguishes the
occurrence of vanishing, dichotomy or compactness in $H^1(\GG)$-bounded
sequences. The following lemma (see for the standard case
Lem. 1.7.5 in  \cite{Caz03}), proves that $\tau$ can be computed as
the limit of $\rho$ on a suitable subsequence.

\begin{lemma}
\label{l:175}
Let $m>0$ and  $\{\Psi_n\}_{n\in\NA}$ be such that: $\Psi_n\in H^1(\GG)$,
\beq
\label{e:end-1}
M[\Psi_n] \to m\quad \text{as} \quad n\to \infty\,,
\eeq
and
\beq
\label{e:end-2}
\sup_{n\in\NA}\|\Psi_n'\|<\infty\,.
\eeq
Then there exist a subsequence $\{\Psi_{n_k}\}_{k\in\NA}$, a nondecreasing
function $\gamma(t)$, and a sequence $t_k\to\infty$ with the following
properties:
\begin{enumerate}[i)]
\item
\label{i:grass-1}
 $\rho(\Psi_{n_k},\cdot)\to \gamma(\cdot)\in [0,m]$ as $k\to\infty$
  uniformly on bounded sets of $[0,\infty)$.
\item 
\label{i:grass-2}
$\tau =
\lim_{t\to\infty}\gamma(t)=\lim_{k\to\infty}\rho(\Psi_{n_k},t_k)=\lim_{
k\to\infty}\rho(\Psi_{n_k},t_k/2)$.
\end{enumerate}
\end{lemma}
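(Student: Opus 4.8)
The plan is to follow the standard argument of Cazenave (Lem. 1.7.5 in \cite{Caz03}), adapted to the graph setting, the two new ingredients being the volume bounds of the previous proposition and the uniform H\"older continuity of $\rho(\Psi_n,\cdot)$ that they yield. First I would observe that by the Gagliardo--Nirenberg inequality \eqref{gajardo3}, the hypotheses \eqref{e:end-1}--\eqref{e:end-2} imply that $\{\Psi_n\}$ is bounded in $H^1(\GG)$, and in particular $\sup_n\|\Psi_n\|_2\leq C$ and $\sup_n\|\Psi_n\|_p\leq C$ for every $p\in[2,\infty]$. Combining this with estimate \eqref{e:ii} of Prop. \ref{p:rho} (say with $p=4$, or any fixed $p\in(2,\infty)$), the family of functions $t\mapsto\rho(\Psi_n,t)$ is uniformly bounded by $M[\Psi_n]\leq C$ and equi-H\"older continuous on $[0,\infty)$ with an exponent and constant independent of $n$. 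Hence the family is equicontinuous and uniformly bounded on every compact subinterval of $[0,\infty)$.

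Next I would apply the Arzel\`a--Ascoli theorem together with a diagonal extraction over the intervals $[0,j]$, $j\in\NA$, to obtain a subsequence $\{\Psi_{n_k}\}$ such that $\rho(\Psi_{n_k},\cdot)$ converges, uniformly on bounded sets of $[0,\infty)$, to a function $\gamma(\cdot)$. Each $\rho(\Psi_{n_k},\cdot)$ is non-decreasing by Prop. \ref{p:rho}.\emph{\ref{i:rho0})}, so $\gamma$ is non-decreasing; and $0\leq\rho(\Psi_{n_k},t)\leq M[\Psi_{n_k}]\to m$ gives $\gamma(t)\in[0,m]$ for all $t$. This proves \emph{\ref{i:grass-1})}. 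Since $\gamma$ is non-decreasing and bounded above by $m$, the limit $\lim_{t\to\infty}\gamma(t)$ exists; I would then check that it equals $\tau=\lim_{t\to\infty}\liminf_{k}\rho(\Psi_{n_k},t)$. By the uniform convergence on bounded sets, $\liminf_k\rho(\Psi_{n_k},t)=\gamma(t)$ for each fixed $t$, so passing to the limit $t\to\infty$ identifies $\tau$ with $\lim_{t\to\infty}\gamma(t)$; one must note in passing that the value of $\tau$ is not changed by passing to a subsequence, which follows because $\liminf_n\rho(\Psi_n,t)$ is itself monotone in $t$.

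For the last equality in \emph{\ref{i:grass-2})} I would choose the sequence $t_k\to\infty$ by a diagonal argument: for each $k$, pick $t_k$ large enough (and growing slowly enough in $k$) that $|\rho(\Psi_{n_k},s)-\gamma(s)|<1/k$ for all $s\in[0,2t_k]$, using the uniform-on-bounded-sets convergence, and also that $\gamma(t_k/2)>\tau-1/k$. Then $\rho(\Psi_{n_k},t_k)$ and $\rho(\Psi_{n_k},t_k/2)$ are both within $1/k$ of $\gamma(t_k)$ and $\gamma(t_k/2)$ respectively, and since $\gamma(t_k/2)\leq\gamma(t_k)\leq\lim_{s\to\infty}\gamma(s)=\tau$ while $\gamma(t_k/2)\to\tau$, both quantities converge to $\tau$. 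The only mildly delicate point — and the one I would be most careful about — is bookkeeping the order of limits so that $t_k$ is chosen compatibly with the already-extracted subsequence $n_k$; this is purely a diagonalization and carries no graph-specific difficulty, all the geometry having been absorbed into the equi-H\"older bound. I would therefore state the proof briefly, referring to \cite{Caz03} for the routine parts and emphasizing only the use of Prop. \ref{p:rho} and the volume estimates.
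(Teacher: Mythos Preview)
Your proposal is correct and follows essentially the same route as the paper: the paper likewise refers to \cite[Lem.~1.7.5]{Caz03} for the routine parts and only highlights that equicontinuity comes from the H\"older estimate of Prop.~\ref{p:rho} combined with Gagliardo--Nirenberg. The sole cosmetic difference is that the paper invokes \eqref{e:ii-infty} (the $p=\infty$ bound, giving Lipschitz continuity) rather than \eqref{e:ii} with $p=4$, but either choice yields the required equicontinuity.
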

\begin{proof} We refer to \cite[Lem. 1.7.5]{Caz03} for the details of the proof. Here we just remark that the equicontinuity of the sequence $\rho(\Psi_{n_k},\cdot)$, needed to apply Arzel\`a - Ascoli theorem, follows from  \eqref{e:ii-infty}, and from the fact that, by Gagliardo-Nirenberg inequality and assumptions
\eqref{e:end-1} - \eqref{e:end-2}, $\|\Psi_n\|_{\infty}$ is uniformly bounded in $n$. 
\end{proof}

We are now ready to prove the concentration compactness lemma.  Although the  statement of the lemma is similar both to the standard case (see \cite[Prop.1.7.6]{Caz03}) and to Lem. 3.3 in \cite{acfn-aihp} where the case of star graph is treated, its proof requires several adjustments and changes and for this reason we provide all the details. We also remark that  the argument  used here to prove the existence of \emph{runaway sequences}  is  simpler than the one used in \cite{acfn-aihp}. 
\begin{lemma}[Concentration compactness]
\label{l:cc}
Let $m>0$ and $\{\Psi_n\}_{n\in\NA}$ be such
that: $\Psi_n\in H^1(\GG)$,
\begin{equation}\label{supfun}
M[\Psi_n] \to m \quad \text{as} \quad n\to \infty\,,
\end{equation}
\begin{equation}\label{supder}
\sup_{n\in\NA}\|\Psi_n'\|<\infty\,.
\end{equation}
Then there exists a subsequence $\{\Psi_{n_k}\}_{k\in \NA}$ such that:
\begin{enumerate}[i)]
\item\label{i:cc1}
(Compactness) If $\tau=m$, at least one of
    the two following cases occurs:
\begin{itemize}
\item[$i_1)$](Convergence) There exists a
function
$\Psi\in H^1(\GG)$  such that $\Psi_{n_k}\to \Psi$  in $L^p$ as
$k\to\infty$ for all $2\leq p\leq \infty$ .
\item[$i_2)$]\label{i:runaway}(Runaway) There exists $e^*\in E^{ex}$, such that for any $t>0$,  and
$2\leq p\leq\infty$
\beq
\label{e:run-1}
\lim_{k\to \infty}  
\left(\sum_{e\neq e^*}\|(\Psi_{n_k})_e\|_{L^p(I_e)}^p   + \|(\Psi_{n_k})_{e^*}\|_{L^p((0,t))}^p \right) =0 .
\eeq
\end{itemize}
\item\label{i:cc2} (Vanishing) If $\tau=0$, then  $\Psi_{n_k}\to 0 $ in $L^p$ as
$k\to\infty$  for all $2< p\leq \infty$. 
\item\label{i:cc3} (Dichotomy) If $0<\tau<m$, then there exist two sequences 
$\{\VV_k\}_{k\in\NA}$ and $\{\WW_k\}_{k\in\NA}$ in $H^1(\GG)$
such that 
\beq
 \supp \VV_k \cap \supp \WW_k = \emptyset 
\label{dic1} 
\eeq
\beq
|\VV_k(\x)| + |\WW_k(\x)| \leq |\Psi_{n_k}(\x)| \qquad \forall \x\in\GG 
\label{dic2} 
\eeq
\beq
\| \VV_k \|_{H^1(\GG)} + \|\WW_k\|_{H^1(\GG)} \leq c
\|\Psi_{n_k}\|_{H^1(\GG)}
\label{dic3} 
\eeq
\beq
\lim_{k \to \infty} M [\VV_k ] = \tau \qquad \qquad \lim_{k \to \infty} M[
\WW_k ]= m -\tau
\label{dic4} 
\eeq
\beq
\liminf_{k\to \infty} \lf( \|\Psi_{n_k}'\|^2 - \| \VV_k' \|^2 - \|
\WW_k' \|^2 \ri) \geq 0
\label{dic5}
\eeq
\beq
\lim_{k\to \infty} \lf( \|\Psi_{n_k} \|_{p}^p - \| \VV_k \|_{p}^p - \|
\WW_k \|_{p}^p \ri) =0 \qquad 2 \leq p < \infty
\label{dic6}
\eeq
\beq
\label{dic7}
\lim_{k\to\infty}\left\||\Psi_{n_k}|^2-  |\VV_{k}|^2 -| \WW_{k}|^2\right\|_\infty=0.
\eeq
\end{enumerate}
\end{lemma}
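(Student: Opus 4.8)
Here is how I would attack the statement.

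\medskip
\noindent\emph{Set-up and the easy alternatives.} The plan is to run the Lions--Cazenave concentration compactness machinery, dividing according to the value of $\tau$, and to replace every ``translate-and-extract'' step of the Euclidean proof by the metric estimates of Propositions~\ref{p:de}--\ref{p:RK}. First I would invoke Lemma~\ref{l:175} to pass to a subsequence (still written $\Psi_{n_k}$) with $\rho(\Psi_{n_k},\cdot)\to\gamma(\cdot)$ uniformly on bounded sets, $\gamma$ nondecreasing with $\lim_{t\to\infty}\gamma(t)=\tau$, and a sequence $t_k\to\infty$ such that $\rho(\Psi_{n_k},t_k)\to\tau$ and $\rho(\Psi_{n_k},t_k/2)\to\tau$. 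For the \emph{vanishing} case $\tau=0$ (so $\gamma\equiv0$ and $\rho(\Psi_{n_k},1)\to0$): since $\GG$ has finitely many edges it is covered by unit balls $\{B(\x_j,1)\}_j$ with uniformly bounded overlap, so for $2<p<\infty$ one writes $\|\Psi_{n_k}\|_p^p\leq\sum_j\|\Psi_{n_k}\|_{L^\infty(B(\x_j,1))}^{p-2}\|\Psi_{n_k}\|_{L^2(B(\x_j,1))}^2$, bounds the $L^\infty$-norm on each ball by $C\|\Psi_{n_k}\|_{H^1(B(\x_j,1))}^{1/2}\|\Psi_{n_k}\|_{L^2(B(\x_j,1))}^{1/2}$ using \eqref{gajardo3}, and sums to get $\|\Psi_{n_k}\|_p^p\leq C\,\rho(\Psi_{n_k},1)^{(p-2)/4}\|\Psi_{n_k}\|_{H^1}^{(p+2)/2}\to0$; the same local inequality gives $\|\Psi_{n_k}\|_\infty\to0$. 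This is case \eqref{i:cc2}.

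\medskip
\noindent\emph{Dichotomy.} For $0<\tau<m$ set $\y_k:=\y(\Psi_{n_k},t_k/2)$, so $\|\Psi_{n_k}\|_{B(\y_k,t_k/2)}^2=\rho(\Psi_{n_k},t_k/2)\to\tau$; since also $\|\Psi_{n_k}\|_{B(\y_k,t_k)}^2\leq\rho(\Psi_{n_k},t_k)\to\tau$ and $\|\Psi_{n_k}\|_{B(\y_k,t_k)}^2\geq\|\Psi_{n_k}\|_{B(\y_k,t_k/2)}^2$, the $L^2$-mass of $\Psi_{n_k}$ in the annulus $B(\y_k,t_k)\setminus B(\y_k,t_k/2)$ tends to $0$, hence so does its mass in any shell $B(\y_k,s)\setminus B(\y_k,r)$ with $t_k/2\leq r<s\leq t_k$. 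I would then take piecewise linear cut-offs $\phi_k,\psi_k\in[0,1]$ depending only on $d(\cdot,\y_k)$ with $\phi_k\equiv1$ on $B(\y_k,t_k/2+1)$, $\supp\phi_k\subset\overline{B(\y_k,t_k/2+t_k/16)}$, $\psi_k\equiv0$ on $B(\y_k,t_k/2+t_k/8)$, $\psi_k\equiv1$ outside $B(\y_k,t_k/2+3t_k/16)$, so that $\supp\phi_k\cap\supp\psi_k=\emptyset$, $\phi_k+\psi_k\leq1$ and $\|\phi_k'\|_\infty+\|\psi_k'\|_\infty=O(1/t_k)$, and set $\VV_k:=\phi_k\Psi_{n_k}$, $\WW_k:=\psi_k\Psi_{n_k}$. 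Then \eqref{dic1}--\eqref{dic2} are immediate; \eqref{dic3} follows from the gradient bound; \eqref{dic4} by sandwiching $M[\VV_k]$, $M[\WW_k]$ between masses of balls of radius in $[t_k/2,t_k]$; \eqref{dic5}--\eqref{dic6} by expanding derivatives and powers and using the vanishing shell masses together with $\|\Psi_{n_k}\|_\infty$ bounded; and for \eqref{dic7} one notes that $1-\phi_k^2-\psi_k^2$ is supported in $B(\y_k,t_k/2+3t_k/16)\setminus B(\y_k,t_k/2+1)$, whose unit neighbourhood still lies inside $B(\y_k,t_k)\setminus B(\y_k,t_k/2)$ and hence carries vanishing $L^2$-mass, so local Gagliardo--Nirenberg makes $\|\Psi_{n_k}\|_{L^\infty}$ vanish there. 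Keeping the cut-off radii strictly inside $(t_k/2,t_k)$ is exactly what makes this last estimate work.

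\medskip
\noindent\emph{Compactness $\tau=m$.} Fix $\x_0\in\GG$. By Proposition~\ref{p:RK} and Remark~\ref{r:RK} we may assume $\Psi_{n_k}\rightharpoonup\Psi$ weakly in $H^1(\GG)$ with $\Psi\in H^1(\GG)$ and $\Psi_{n_k}\to\Psi$ in $L^p(B(\x_0,R))$ for every $R$ and every $1\leq p\leq\infty$; in particular $\|\Psi\|^2\leq m$. If $\|\Psi\|^2=m$, then for $\ve>0$ pick $R$ with $\|\Psi\|_{B(\x_0,R)}^2>m-\ve$, deduce $\|\Psi_{n_k}\|_{B(\x_0,R)}^2>m-2\ve$ for large $k$, hence $\|\Psi_{n_k}\|_{\GG\setminus B(\x_0,R)}^2\to0$ as $k\to\infty$, $R\to\infty$, which combined with local $L^2$-convergence gives $\Psi_{n_k}\to\Psi$ in $L^2(\GG)$, and \eqref{gajardo3} upgrades this to $L^p$ for all $2\leq p\leq\infty$: this is case $i_1)$. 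If instead $m_1:=\|\Psi\|^2<m$, then for all large $T$ one has $\gamma(T)>m_1$, and the maximiser $\z_k^{(T)}:=\y(\Psi_{n_k},T)$ must run to infinity, i.e. $d(\z_k^{(T)},\x_0)\to\infty$: otherwise along a subsequence $B(\z_k^{(T)},T)\subset B(\x_0,M+T)$ and local $L^2$-convergence forces $\rho(\Psi_{n_k},T)=\|\Psi_{n_k}\|_{B(\z_k^{(T)},T)}^2\leq\|\Psi_{n_k}\|_{B(\x_0,M+T)}^2\to\|\Psi\|_{B(\x_0,M+T)}^2\leq m_1<\gamma(T)$, a contradiction. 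Thus for large $k$ the point $\z_k^{(T)}$ lies on the infinite tail of some external edge; choosing $T_j\to\infty$ with $\gamma(T_j)>m/2$, $\gamma(T_j)\to m$, a diagonal extraction together with the remark that two balls $B(\z_k^{(T_i)},T_i)$, $B(\z_k^{(T_j)},T_j)$ sitting in \emph{distinct} far-out tails are disjoint (so their masses, both eventually $>m/2$, cannot coexist in $\Psi_{n_k}$) forces all the $\z_k^{(T_j)}$ onto one fixed $e^*\in E^{ex}$. Finally, for fixed $t>0$ and any $j$, once $k$ is large the ball $B(\z_k^{(T_j)},T_j)$ lies in $\{(e^*,x):x>t\}$ and avoids every $e\neq e^*$, whence
\[\sum_{e\neq e^*}\|(\Psi_{n_k})_e\|_{L^2(I_e)}^2+\|(\Psi_{n_k})_{e^*}\|_{L^2((0,t))}^2\ \leq\ M[\Psi_{n_k}]-\rho(\Psi_{n_k},T_j)\ \longrightarrow\ m-\gamma(T_j),\]
which is $<\ve$ for $j$ large; letting $j\to\infty$ gives the $L^2$ version of \eqref{e:run-1}, and \eqref{gajardo2}--\eqref{gajardo3} applied edge by edge (the $H^1$-norms being bounded) promote it to all $2\leq p\leq\infty$: this is case $i_2)$. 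Since $\|\Psi\|^2\leq m$ always, the two sub-cases are exhaustive.

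\medskip
\noindent\emph{Expected main obstacle.} The real difficulty is the compact case: once the weak limit is seen to lose mass, one must prove that the lost mass escapes as a single coherent lump along one distinguished half-line, rather than dispersing or splitting among several ends. This is the step that replaces, for the non-translation-invariant $\GG$, the translate-and-extract step of the Euclidean argument; it is made to work by the finiteness of $E^{ex}$ (pigeonhole) and the disjointness of distant tails, and it is essentially the only place where the starlike geometry is used in an essential way. A secondary but genuinely delicate point is the uniform bound \eqref{dic7} in the dichotomy case, which is why the cut-off radii must be kept strictly inside the good annulus, so that local Gagliardo--Nirenberg can turn vanishing $L^2$-mass into a vanishing $L^\infty$-norm on the transition region.
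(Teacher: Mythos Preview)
Your argument is correct and follows the same concentration--compactness scheme as the paper, but the compactness case $\tau=m$ is organised differently. You first extract the weak $H^1$-limit $\Psi$ and split on whether $\|\Psi\|^2=m$; when $\|\Psi\|^2<m$ you show that the maximisers $\z_k^{(T)}$ escape and then pin down $e^*$ by a disjoint-balls/pigeonhole argument together with a diagonal extraction over the scales $T_j$. The paper instead fixes one reference scale $t_{m/2}$, tracks the concrete points $\y_k(t_{m/2})$, and splits on whether these stay bounded; the single elementary distance bound $d(\y_k(t_{m/2}),\y_k(t_\lambda))\leq t_{m/2}+t_\lambda$ (otherwise two disjoint balls would carry mass $>m$) then forces every $\y_k(t_\lambda)$ to follow $\y_k(t_{m/2})$ out along the same half-line, giving the runaway conclusion directly with no diagonalisation. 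For~\eqref{dic7} the paper's route is also a bit shorter: setting $Z_k=\Psi_{n_k}-R_k-S_k$ one has $\|Z_k\|\to0$ with $\|Z_k\|_{H^1}\leq c$, so global Gagliardo--Nirenberg already gives $\|Z_k\|_\infty\to0$, and the pointwise bound $\big||\Psi_{n_k}|^2-|R_k|^2-|S_k|^2\big|\leq c\,|\Psi_{n_k}|\,|Z_k|$ finishes; your unit-neighbourhood local estimate on the transition shell works but is not needed. The vanishing arguments are the same idea (local Gagliardo--Nirenberg summed over a covering); the paper writes it for $p=6$ first and then interpolates.
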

\begin{proof}
Let $\{\Psi_{n_k}\}_{k\in\NA}$, $\gamma(\cdot)$ and $t_k$  be the subsequence, the
function and the sequence defined in Lem. \ref{l:175}.  

Proof of \emph{\ref{i:cc1})}. Suppose $\tau=m$. By Lem. \ref{l:175}
\emph{\ref{i:grass-2})},  for any  $m/2\leq\lambda< m$ there exists $t_\la$
large enough such that $\gamma(t_\la)>\la$. Then by Lem.  \ref{l:175}
\emph{\ref{i:grass-1})}, for $k$ large enough $\rho(\Psi_{n_k},t_\la)>\la$. 

Set $\underline{y_k}(t)\equiv \underline{y}(\Psi_{n_k},t)$, where
$\underline{y}(\Psi_{n_k},t)$ was defined in Prop. \ref{p:rho}
\emph{\ref{i:rho1})}. For $k$ large enough, we have that 
\beq
\label{e:cmin}
d(\underline{y_{k}}(t_{m/2}),\underline{y_{k}}(t_{\la})) \leq t_{m/2}+t_\la\,. 
\eeq
To prove \eqref{e:cmin}, assume that 
$d(\underline{y_{k}}(t_{m/2}),\underline{y_{k}}(t_{\la})) > t_{m/2}+t_\la$,
then the balls $B(\underline{y_k}(t_{m/2}),t_{m/2})$ and
$B(\underline{y_k}(t_\la),t_{\la})$ would be disjoint, thus implying  
\[
M[\Psi_{n_k}]\geq \|\Psi_{n_k}\|_{B(\underline{y_k}(t_{m/2}),t_{m/2})}^2 +
\|\Psi_{n_k}\|_{B(\underline{y_k}(t_\la),t_{\la})}^2 > \frac{m}{2}+\la \geq m
\]
which is impossible because $M[\Psi_{n_k}]\to m$.
Next we distinguish two cases: $\{\underline{y_{k}}(t_{m/2})\}_{k\in\NA}$ bounded (it belongs to a finite ball on the graph)
and $\{\underline{y_k}(t_{m/2})\}_{k\in\NA}$ unbounded  (there is no finite  ball on the graph containing the sequence). 

Case $\underline{y_{k}}(t_{m/2})$ bounded.  By Prop. \ref{p:RK} and Rem. \ref{r:RK}, we have that there exists a subsequence $\Psi_{n_k}$ and a function $\Psi\in H^1(\GG)$ such that  $\Psi_{n_k}\to \Psi$ weakly in $H^1(\GG)$ and $\Psi_{n_k}\to \Psi$ in $L^2(B(\underline y,t))$ for any fixed
$\underline y$ and $t$.  \\ 
The function $\Psi$ might be the null function, next we show that for $\underline{y_{k}}$ bounded this is not the case. We prove indeed that $M[\Psi]=m$ which, together with the weak convergence in $H^1(\GG)$, implies that  $\Psi_{n_k} \to
\Psi$ in $L^2$, then the convergence in $L^p$ for $2<p\leq \infty$ follows from
Gagliardo-Nirenberg inequality. 
\\
Fix $\lambda \in (m/2,m)$, and let
  $t_\lambda$ be such that $\rho (\Psi_{n_k}, t_\lambda) > \lambda$ for $k$ large enough. Since, by \eqref{e:cmin},
  $\underline{y_{k}} (t_\lambda)$ is bounded as well,  up to choosing a
subsequence which we  still denote by $\Psi_{n_k}$, we
can
assume that $\underline{y_{k}}(t_\lambda)\to
\underline{y}^*(t_\lambda)$ and  $\underline{y_{k}}(t_{m/2})\to
\underline{y}^*(t_{m/2})$.  Then, for any fixed $\ve>0$ and $k$ large enough
we
have $d(\underline{y}^*(t_{m/2}),\underline{y_k}(t_{m/2}))\leq
\ve$, so that, by
\eqref{e:cmin} and the triangle inequality it follows that  $d(\underline{y}^*(t_{m/2}),
\underline{y_k}(t_\la)) \leq \ve + t_{m/2} + t_\la$. Setting $T= 2(\ve +
t_{m/2} + t_\la) $ we certainly have that
$B(\underline{y_k}(t_\la),t_\la)\subseteq B(\underline{y}^*(t_{m/2}),T)$ so that
\begin{equation*}
\|\Psi_{n_k}\|_{B(\underline{y}^*(t_{m/2}),T)}^2 \geq  
\|\Psi_{n_k}\|_{B(\underline{y_k}(t_{\la}),t_\la)}^2 =
\rho(\Psi_{n_k},t_\la)>\la\,. 
\end{equation*}
Since 
\[
M[\Psi] \geq \|\Psi\|_{B(\underline{y}^*(t_{m/2}),T)}^2 =\lim_{k\to\infty}
\|\Psi_{n_k}\|_{B(\underline{y}^*(t_{m/2}),T)}^2,
\]
 we have that  $M[\Psi] \geq \la$. As we can
   choose $\lambda$ arbitrarily close to $m$, we get $M[\Psi] \geq
m$. On the other hand, by weak convergence,  we have that 
\[
M[\Psi] \leq \liminf_{k\to\infty}M[\Psi_{n_k}]=m,
\]
so that $M[\Psi] =m$. 
 
Assume now that  $\underline{y_{k}}(t_{m/2})$ is
unbounded. Then,  up to choosing a subsequence, which we still
denote by $\Psi_{n_k}$, we
can
assume that there exists $e^*\in E^{ex}$ such that  $ \{\underline{y_{k}}(t_{m/2})\}_{k\in\NA}$ belongs to the the edge $ e^*$ and
$y_k(t_{m/2})\to\infty$.\\
Fix $\ve$ and $t$. Set $\lambda = m-\ve$ and  $t_\la$ such that for $k$ large enough 
$\rho(\Psi_{n_k},t_\la)>\la$. 
 By \eqref{e:cmin} we have that $y_k(t_\la)\to \infty $, so that,  for 
$k$ large enough, $y_k(t_\la)-t_\la>t$ and  
\[\begin{aligned}
 \int_{t}^\infty |(\Psi_{n_k})_{e^*} (x)|^2 dx \geq &  \|(\Psi_{n_k})_{e^*} \|^2_{L^2((y_k(t_\la)-t_\la,y_k(t_\la)+t_\la))}   \\ 
 = & \|\Psi_{n_k}\|_{B(\underline{y_k}(t_\la),t_\la)}^2 =
 \rho(\Psi_{n_k},t_\la) >\la =m-\ve.
\end{aligned}\]
On the other hand, by \eqref{supfun} and for $k$ large enough, one has that  
\[
M[\Psi_{n_k}] = \sum_{e\neq e^*} \|(\Psi_{n_k})_{e}  \|^2_{L^2(I_e)} +
 \int_{0}^t |(\Psi_{n_k})_{e^*} (x)|^2 dx + \int_{t}^\infty |(\Psi_{n_k})_{e^*} (x)|^2
dx
< m+\ve,
\]
so that 
\[
\sum_{e\neq e^*} \|(\Psi_{n_k})_{e^*} \|^2_{L^2(I_e)} +
 \int_{0}^t |(\Psi_{n_k})_{e^*} (x)|^2 dx
< 2\ve \,.
\]
The limit \eqref{e:run-1} for $p>2$
follows by   Gagliardo-Nirenberg inequalities applied to the graph $\GG_t$ obtained from $\GG$ by cutting the edge $e^*$ at length $t$. We remark that the graph $\GG_t$ might be  compact. 
\\

Proof of \emph{\ref{i:cc2})}. We start with the proof of a useful inequality, see Eq. \eqref{a} below. Let $L_{max}$ be the maximal length of the internal edges. For any internal edge $e\in E^{in}$,   by Gagliardo-Nirenberg inequality applied to the interval $I_e$, by Eq. \eqref{e:rho},  and  since $\rho(\Psi,\cdot)$
is non-decreasing, one has that 
\[\begin{aligned}
\|\psi_e\|_{L^6(I_e)}^6 \leq & c_e \|\psi_e\|_{L^2(I_e)}^4 \| \psi_e\|_{H^1(I_e)}^2  \\ 
  \leq & c_e \rho(\Psi,L_e/2)^2  \| \psi_e\|_{H^1(I_e)}^2
 \leq c \rho(\Psi,L_{max}/2)^2  \| \psi_e\|_{H^1(I_e)}^2
\end{aligned}\] 
where $c_e$ is a constant that depends on the edge $e\in E^{in}$ (on the length of the interval $I_e$)  and we set $c = \max_{e\in E^{in}} c_e$.  On the other hand, for any external edge $e\in E^{ext}$, one has 
\[\begin{aligned}
\|\psi_e\|_{L^6(\RE_+)}^6 
= &  \sum_{n = 0}^{\infty} \|\psi_e \|^6_{L^6((nL_{max},(n+1)L_{max}))} \\ 
\leq &c \sum_{n = 0}^{\infty} \|\psi_e \|^4_{L^2((nL_{max},(n+1)L_{max}))}\|\psi_e \|^2_{H^1((nL_{max},(n+1)L_{max}))} \\
\leq &c\rho(\Psi,L_{max}/2)^2  \sum_{n = 0}^{\infty}\|\psi_e \|^2_{H^1((nL_{max},(n+1)L_{max}))} = c\rho(\Psi,L_{max}/2)^2 \|\psi_e \|^2_{H^1(\RE_+)} ,
\end{aligned}\]
where $c$ is a constant that depends on $L_{max}$. Summing up on internal and external edges we get 
\begin{equation}
\label{a}
\|\Psi\|_{6}^6  \leq  c\rho(\Psi,L_{max}/2)^2 \|\Psi \|^2_{H^1}.  
\end{equation}
Suppose now that  $\tau=0$.  By Lem. \ref{l:175},
$\tau=\lim_{k\to\infty}\rho(\Psi_{n_k},t_k)=0$.  Then since $\rho(\Psi,\cdot)$
is non-decreasing and $t_k\to \infty$, $\lim_{k\to\infty}\rho(\Psi_{n_k},L_{max}/2)=0$, and $\lim_{k\to\infty}\|\Psi_{n_k}\|_{6}=0$ by \eqref{a}.
The statement   for $2<p<6$ follows from the  H\"older
inequality $\|\Psi\|_{p} \leq
\|\Psi\|_{6}^{\frac{3(p-2)}{2p}}\|\Psi\|^{\frac{(6-p)}{2p}}$, while  for
$6< p\leq  \infty$ one uses  inequality   \eqref{gajardo3} with $q=6$.\\

Proof of \emph{\ref{i:cc3})}. Suppose that $0<\tau<m$ and let  $\theta$ and $\varphi$ be two cut-off functions such that $\theta,\varphi\in C^{\infty} (\RE_+ )$, 
$0\leqslant \theta , \varphi \leqslant 1$ and 
\[
\theta(t) = 
\begin{cases}
1 & 0\leqslant t \leqslant 1/2 \\
0 & t \geqslant 3/4 
\end{cases}
\qquad \qquad 
\varphi(t) = 
\begin{cases}
0 & 0\leqslant t \leqslant 3/4 \\
1 & t \geqslant 1 
\end{cases} 
\]
Set $\underline{y}(t_k)\equiv \underline{y}(\Psi_{n_k},t_k)$, where
$\underline{y}(\Psi_{n_k},t)$ was defined in Prop. \ref{p:rho}
\emph{\ref{i:rho1})}. Define  the following cut off functions
\begin{equation}\label{thetaphi}
\Theta_k (\x) = \theta \lf( \frac{d(\x ,  \y ( t_k /2)) }{t_k} \ri) \qquad \qquad 
\Phi_k (\x) = \varphi \lf( \frac{d (\x , \y ( t_k /2) ) }{t_k} \ri) .
\end{equation}
We remark that $(\Theta_k)_e (x) = \theta ( d_{e, \y ( t_k /2)}(x)/t_k)$ with $d_{e,\y}$ given as in Prop. \ref{p:de}, and similarly for $\Phi_k$. \\ 
Let $\VV_k$ be defined by 
\[
\VV_k (\x ) = \Theta_k ( \x) \Psi_{n_k}(\x),
\]
and let  $\WW_k$  be  defined by
\[
\WW_k (\x ) = \Phi_k ( \x) \Psi_{n_k}(\x),
\]
products to be understood pointwise. We remark that $\VV_k$ ($\WW_k$ resp.) coincides with $\Psi_{n_k}$ in
the ball $B(\underline y(t_k/2),t_k/2)$ (in the set $\GG\backslash
B(\underline y(t_k/2),t_k)$ resp.) and $\VV_k = 0$ ($\WW_k = 0$ resp.)
in  the set $\GG\backslash B(\underline y(t_k/2),3t_k/4)$ (in the ball
$B(\underline y(t_k/2),3t_k/4)$ resp.). Properties \eqref{dic1} and
\eqref{dic2}  are immediate. Property \eqref{dic3} also immediately follows from the definitions of $R_k$ and $S_k$ and from Prop. \ref{p:de}.  Next we notice that by
Prop. \ref{p:rho}, \emph{\ref{i:rho1})}, 
\[
\rho ( \Psi_{n_k} , t_k /2 )  = \|\Psi_{n_k}\|_{B(\underline y(t_k/2),t_k/2)}^2 
\leqslant M[\VV_k]. 
\]
Moreover, since  $\theta(t) \leq 1$,  
\[
M[\VV_k] \leq  \|\Psi_{n_k}\|_{B(\underline y(t_k/2),t_k)}^2 \leq \|\Psi_{n_k}\|_{B(\underline y(t_k),t_k)}^2 = \rho( \Psi_{n_k} , t_k)\,,
\]
where we have taken into account 
 the optimality of $y(t_k)$ according to Prop. \ref{p:rho}, \emph{\ref{i:rho1})} and the definition of $\rho(\Psi,t)$. Therefore 
\be
\lim_{k\to \infty}M[\VV_k] = \tau
\ee
by Lem. \ref{l:175}, \emph{\ref{i:grass-2})}. Define  $\ZZ_k := \Psi_{n_k} - \VV_k - \WW_k $ and notice that 
\[\supp (\ZZ_k) \subseteq B(\underline y(t_k/2),t_k)\backslash B(\underline y(t_k/2),t_k/2)\]
 and  $|\ZZ_k|\leqslant |\Psi_{n_k}|$, to be understood pointwise.
Then one has 
\begin{align}
M[\ZZ_k] \leq &  \|\Psi_{n_k}\|^2_{B(\underline y(t_k/2),t_k)\backslash B(\underline y(t_k/2),t_k/2)}
\nonumber
\\
= & \|\Psi_{n_k}\|^2_{B(\underline y(t_k/2),t_k)} - 
     \|\Psi_{n_k}\|^2_{B(\underline y(t_k/2),t_k/2)}
\leq \rho(\Psi_{n_k} , t_k ) - \rho(\Psi_{n_k} , t_k /2) 
\label{zampone}
\end{align}
again by the optimality properties of $\underline y(t_k)$. It follows from \eqref{zampone} and Lem. \ref{l:175}, \emph{\ref{i:grass-2})}  that 
\begin{equation}\label{ZZk}
M[\ZZ_k]\to 0  \quad \text{as }k \to \infty, 
\end{equation} 
and therefore $M[\WW_k]\to m -\tau$ which concludes the proof of \eqref{dic4}.

To prove \eqref{dic6} and \eqref{dic7} we use 
\begin{equation}\label{radio}
\big|
| \Psi_{n_k}(\x)|^p - |\VV_k(\x)|^p -|\WW_k(\x)|^p
\big| \leqslant c_p |\Psi_{n_k}(\x)|^{p-1} |\ZZ_k(\x)| \qquad p\geq 1,
\end{equation}
to be understood pointwise, which in turn implies 
\[
\big|\|\Psi_{n_k}\|^p - \|\VV_k\|^p -\|\WW_k\|^p
\big| \leqslant c \|\Psi_{n_k}\|_{2(p-1)}^{p-1} \|\ZZ_k\| \leq c \|\ZZ_k\| \qquad p\geq 2
\]
where we used \eqref{supfun}, \eqref{supder}, and Gagliardo-Nirenberg inequality \eqref{gajardo3}. The limit  \eqref{dic6} then follows from $\|\ZZ_k\|\to 0$. To prove \eqref{dic7} we use \eqref{radio} with $p=1$, and the fact that, by $\|\ZZ_k\|_{H^1}\leq c$, $\|\ZZ_k\|\to 0$,  and Gagliardo-Nirenberg inequality, one has $\|\ZZ_k\|_\infty \to 0$. 

Concerning the  inequality \eqref{dic5},
first notice that
\[
\begin{aligned}
&|(\Psi_{n_k})_e '|^2 - |(\VV_k)_e ' |^2 - |(\WW_k)_e '|^2 \\
=  &|(\Psi_{n_k})_e ' |^2 \big[1-(\Theta_k)_e^2 -(\Phi_k)_e^2\big]  \\ 
&-|(\Psi_{n_k})_e|^2 \big[((\Theta_k)_e ')^2 + ((\Phi_k ')_e)^2\big]
- \Re\overline{(\Psi_{n_k})}_e(\Psi_{n_k})_e'\big[(\Theta_k)_e^2 +(\Phi_k)_e^2 \big]' \\
\geqslant &-\frac{c}{t_k^2} |(\Psi_{n_k})_e|^2
- \frac{c}{t_k} |(\Psi_{n_k})_e' | |(\Psi_{n_k})_e|
\end{aligned}
\]
for almost all $x\in I_e$, where we used $1-(\Theta_k)_e^2 -(\Phi_k)_e^2\geq 0$ and the fact that  $|(\Theta_k)_e '(x)|\leq c/t_k$, $|(\Phi_k ')_e(x)|\leq c/t_k$ for almost all $x\in I_e$ (see the remark below Eq. \eqref{thetaphi} and Prop. \ref{p:de}).  The inequality \eqref{dic5} follows by  integrating on $I_e$ and summing up on $e$, and by recalling that $t_k\to\infty$. 
\end{proof}
\begin{remark}\label{r:2.8}
We note that Eq. \eqref{e:run-1} in Lem. \ref{l:cc}-$i_2)$ implies that in the runaway case 
\[\lim_{k\to \infty}  
\|\Psi_{n_k}\|_{L^p(B(\y,t))} =0 \]for any $2\leq p\leq \infty $,  $\y\in\GG$,  and $t>0$. 
\end{remark}

\section{Bifurcation analysis}
\n
In this section we study the solutions of 
\beq \label{eqstaz}
H \Phi - |\Phi|^{2\mu} \Phi = - \ome \Phi  \qquad \Phi\in \DD(H),\;\ome>0,
\eeq
by means of bifurcation techniques. Eq. \eqref{eqstaz} is the stationary equation associated to Eq. \eqref{equation}, it arises when one looks for solutions of the form $\Psi(t) = e^{i\omega t}\Phi(\omega)$.

The main result of this section is the following:
\begin{theorem}[Bifurcation from the linear ground state]\label{t:bif}
 If Assumptions \ref{a:0}, \ref{a:1}, and \ref{a:2} hold true, then there exists $\delta>0 $ such that, for any $\omega\in(E_0, E_0+\delta )$, Eq. \eqref{eqstaz} admits a unique (up to phase multiplication) solution $\Phi(\omega)$. Moreover, the function $m(\omega):= \|\Phi(\omega)\|^2 $ belongs to  $C^1(E_0,E_0+\de)$, is such that 
\begin{equation}\label{mPhiome}
m(\omega) =  \left( \frac{ \ome-E_0 }{ \| \Phi_0\|_{2\mu+2 }^{2\mu+2 } } \right)^{\f{1}{\mu} } + o\left((\omega-E_0)^{\frac1\mu}\right),
\end{equation}
and it is  invertible. Denoting its inverse by $\omega(m)$, one has that the function  $E(m) := E[\Phi(\ome(m))]$ is continuous for $m>0$ small enough,  and  
\begin{equation}\label{EPhiome}
E(m)  =-E_0 m + o(m). 
\end{equation}
\end{theorem}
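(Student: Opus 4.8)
\emph{Strategy.} I would run a rescaled Lyapunov--Schmidt reduction around the linear ground state $\Phi_0$. Equation \eqref{eqstaz} is invariant under $\Phi\mapsto e^{i\vartheta}\Phi$, and a standard a priori estimate (pair \eqref{eqstaz} with $\Phi$ and use the spectral gap of $H$ at $-E_0$) shows that any small nonzero solution is not orthogonal to $\Phi_0$; up to a phase we may therefore impose $(\Phi_0,\Phi)>0$. Writing $P$ for the orthogonal projection of $L^2(\GG)$ onto $\C\Phi_0$ and $Q=I-P$, set
\[
\Phi=\ve\,(\Phi_0+\eta),\qquad \ve:=(\Phi_0,\Phi)>0,\qquad \eta\perp\Phi_0,\qquad \ome=E_0+\Lambda,\quad \Lambda>0 .
\]
Inserting this into \eqref{eqstaz}, dividing by $\ve$, using $H\Phi_0=-E_0\Phi_0$ and projecting with $P$ and $Q$, and recalling $(\Phi_0,\eta)=0$, $\|\Phi_0\|=1$, we obtain the equivalent system
\[
\Lambda=\ve^{2\mu}\bigl(\Phi_0,\,|\Phi_0+\eta|^{2\mu}(\Phi_0+\eta)\bigr),\qquad
(H+E_0+\Lambda)\,\eta=\ve^{2\mu}\,Q\bigl[|\Phi_0+\eta|^{2\mu}(\Phi_0+\eta)\bigr] .
\]

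\emph{Solving the range equation.} By Assumption \ref{a:2} there is $g>0$ with $H+E_0\geq g$ on $\Ran Q$, so $(H+E_0+\Lambda)^{-1}$ is bounded on $\Ran Q$ uniformly for $|\Lambda|\leq g/2$ and maps $L^2(\GG)$ into $\DD(H)$. Substituting the first (kernel) equation into the second turns the latter into a fixed-point equation $\eta=\ve^{2\mu}K(\ve,\eta)$, where $K$ is the composition of the Nemytskii map $\Psi\mapsto|\Psi|^{2\mu}\Psi$ with $Q$ and the resolvent. Using $H^1(\GG)\hookrightarrow L^\infty(\GG)$ and the algebra bounds of Proposition \ref{loch2}, the map $\Psi\mapsto|\Psi|^{2\mu}\Psi$ is of class $C^1$ from $H^1(\GG)$ into $L^2(\GG)$ for every $\mu>0$ (its differential $v\mapsto|\Psi|^{2\mu}v+2\mu|\Psi|^{2\mu-2}\Psi\,\Re(\overline\Psi v)$ is bounded by $(1+2\mu)|\Psi|^{2\mu}|v|$ and depends continuously on $\Psi$ in $\mathcal L(H^1,L^2)$), hence $K$ is $C^1$ on $(0,\infty)\times H^1(\GG)$ and, for $\ve$ small, a contraction in $\eta$ on a small ball of $H^1(\GG)\cap\Ran Q$. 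The implicit function theorem then yields, for $\ve\in(0,\ve_0)$, a unique small solution $\eta(\ve)\in C^1\bigl((0,\ve_0);H^1(\GG)\bigr)$ with $\|\eta(\ve)\|_{H^1}=O(\ve^{2\mu})$; a bootstrap in the range equation upgrades it to $\DD(H)$, and the kernel equation gives $\Lambda(\ve)=\ve^{2\mu}\bigl(\|\Phi_0\|_{2\mu+2}^{2\mu+2}+o(1)\bigr)>0$. This produces the branch $\Phi(\ve)=\ve(\Phi_0+\eta(\ve))\in\DD(H)$, unique up to phase in a neighbourhood of $(0,E_0)$ in $H^1(\GG)\times\erre$.

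\emph{Reparametrisation and asymptotics.} Differentiating the kernel equation gives $\Lambda\in C^1(0,\ve_0)$ with $\Lambda'(\ve)=2\mu\,\ve^{2\mu-1}\|\Phi_0\|_{2\mu+2}^{2\mu+2}+o(\ve^{2\mu-1})>0$ for $\ve$ small, so $\ve\mapsto\ome(\ve)=E_0+\Lambda(\ve)$ is a $C^1$ diffeomorphism of $(0,\ve_0)$ onto an interval $(E_0,E_0+\de)$, which defines $\Phi(\ome)$. The mass $m(\ome)=\|\Phi(\ve)\|^2=\ve^2\bigl(1+\|\eta(\ve)\|^2\bigr)=\ve^2(1+o(1))$ is $C^1$ and strictly increasing in $\ve$, hence $m(\cdot)\in C^1(E_0,E_0+\de)$ and is invertible. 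For the expansion I pair \eqref{eqstaz} with $\Phi_0$, getting the exact identity $(\ome-E_0)(\Phi_0,\Phi)=(\Phi_0,|\Phi|^{2\mu}\Phi)$, i.e.\ $\Lambda(\ve)=\ve^{2\mu}(\|\Phi_0\|_{2\mu+2}^{2\mu+2}+o(1))$; solving for $\ve^2$ and inserting into $m=\ve^2(1+o(1))$ gives \eqref{mPhiome}. For the energy I pair \eqref{eqstaz} with $\Phi$ to get $E^{lin}[\Phi]=\|\Phi\|_{2\mu+2}^{2\mu+2}-\ome\,m$, whence
\[
E[\Phi]=\frac{\mu}{\mu+1}\,\|\Phi\|_{2\mu+2}^{2\mu+2}-\ome\,m=-E_0\,m-(\ome-E_0)\,m+\frac{\mu}{\mu+1}\,\|\Phi\|_{2\mu+2}^{2\mu+2} .
\]
Since $\|\Phi\|_{2\mu+2}^{2\mu+2}=\ve^{2\mu+2}(\|\Phi_0\|_{2\mu+2}^{2\mu+2}+o(1))$, $(\ome-E_0)m=O(\ve^{2\mu+2})$, and $\ve^2\sim m$ with $\mu>0$, both corrections are $o(m)$, so \eqref{EPhiome} follows; continuity of $E(m)$ follows from continuity of $\ome\mapsto\Phi(\ome)$ in $\DD(H)$ and of the functional $E$.

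\emph{Main difficulty.} Nothing about the graph is used beyond self-adjointness of $H$, the spectral gap and the simplicity of $-E_0$: the scheme is the textbook bifurcation from a simple eigenvalue. The points needing care are the bookkeeping ones: choosing the amplitude rescaling so that the homogeneity $|\Psi|^{2\mu}\Psi\sim\ve^{2\mu+1}$ is explicit and the reduced equation shows the advertised leading order; handling the residual phase invariance so that ``uniqueness up to phase'' is rigorous; and the limited pointwise smoothness of $t\mapsto|t|^{2\mu}t$ for $\mu<1/2$, which is circumvented by treating the nonlinearity as a $C^1$ map from $H^1(\GG)$ into $L^2(\GG)$ and exploiting the smoothing of the resolvent, rather than as a self-map of $H^1(\GG)$.
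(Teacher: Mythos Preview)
Your proof is correct and follows essentially the same route as the paper: a Lyapunov--Schmidt reduction around the simple eigenvalue $-E_0$, with the projection $P$ onto $\Phi_0$ and $Q=I-P$, yielding a kernel (bifurcation) equation and a range (auxiliary) equation. The minor implementation differences are that the paper keeps $\omega$ as an independent parameter and applies the implicit function theorem twice (first for $\Theta_*(a,\omega)$ in the auxiliary equation, then for $\omega_*(a)$ in the bifurcation equation), whereas you substitute the kernel equation into the range equation at the outset and reduce directly to a single fixed-point problem in $\eta$ parametrised by $\varepsilon$; your computation of $E(m)$ via the Pohozaev-type identity obtained by pairing \eqref{eqstaz} with $\Phi$ is also slightly cleaner than the paper's direct expansion, and you are more explicit than the paper about why the Nemytskii map is $C^1$ from $H^1(\GG)$ to $L^2(\GG)$ even when $2\mu<1$.
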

\begin{proof}
We follow the approach used in \cite{kirr-kevrekidis-pelinovsky:11}. Without loss of generality, we can take $\Phi(\omega)$  real valued. We start by noting  that $\DD(H)$ with the graph norm  $\vertiii{\Phi}_H =  \|H \Phi \|+\|\Phi\|$    is a Banach space. 

We note that the following inequality holds true: 
\begin{equation*}
\||\Psi|^{2\mu} \Psi \| \leq C \vertiii{\Psi}_H^{2\mu+1}.
\end{equation*}
To prove it we use first H\"older and Gagliardo-Nirenberg inequalities  to obtain  \[\||\Psi|^{2\mu} \Psi \| \leq C \|\Psi'\|^\mu \vertiii{\Psi}^{\mu+1}_H.\] Then we prove that $\|\Psi'\| \leq C  \vertiii{\Psi}_H$. To this aim,   we use the fact that $E[\Psi] = (\Psi,H\Psi)$, which in turn implies 
\begin{align}
 \|\Psi'\|^2 \leq  & (\Psi,H\Psi) + (\Psi,W_-\Psi) + \sum_{\v \in V_- } |\al(v)| |\Psi(\v)|^2  \nonumber  \\ 
 \leq  &  \vertiii{\Psi}_H^2 + C_0\left( \|\Psi'\|^\frac1r \vertiii{\Psi}_H^{2-\frac1r}  + \|\Psi'\|\vertiii{\Psi}_H\right), \label{singer}
\end{align}
 for some large constant $C_0$. Here we used again Gagliardo-Nirenberg inequality and 
\begin{equation*}
(\Psi,W_- \Psi)  \leq \|W_-\|_r \|\Psi\|^2_{2r/(r-1)} \leq C \|\Psi'\|^\frac1r \vertiii{\Psi}_H^{2-\frac1r} , \end{equation*}
see also Eq. \eqref{W-1} below.  By the bound \eqref{singer} we infer that if $\|\Psi'\| > 4C_0  \vertiii{\Psi}_H$ it must be $ \|\Psi'\| \leq   \vertiii{\Psi}_H$, hence $\|\Psi'\| \leq C  \vertiii{\Psi}_H$.

Let us introduce the map $F: \DD (H)\times \RE_+ \rightarrow L^2$
\begin{equation}\label{arms}
F(\Phi , \ome) = (H+\ome ) \Phi -| \Phi|^{2\mu} \Phi.
\end{equation}
Note that despite the fact that we keep using the notation $| \Phi|^{2\mu} \Phi$, since we are assuming $\Phi$ real valued $| \Phi|^{2\mu} \Phi$ could be understood as $(\Phi)^{2\mu+1}$.   It is clear that $F\in C^1( \DD (H)\times \RE_+ , L^2)$. Notice that
\be 
D_\Phi F (0, \ome) \Psi =  (H+\ome ) \Psi.
\ee
We use the Lyapunov-Schmidt method to study the existence of solutions of
\beq \label{bifgen}
F(\Phi,\ome) =0,
\eeq
which is equivalent to Eq. \eqref{eqstaz}. Let us introduce two orthogonal projectors in $L^2$
\be
P= \Phi_0 \left(\Phi_0 ,\,\cdot\, \right)  \qquad \qquad Q= I-P
\ee
and extend their action on $\DD(H)$.
 We decompose accordingly
\be
\Phi= a \Phi_0 + \Theta, 
\ee
where $a = \left(\Phi_0,\Phi\right)$ and $Q\Theta =0$. This decomposition is well defined on $\DD(H)$, i.e., $\Theta\in D(H)$. Moreover, since if $\Phi$ is a solution of Eq. \eqref{arms} so is $-\Phi$ we can assume $a\geq0$. Then equation \eqref{bifgen} is equivalent to the system
\beq \label{LS}\left\{
\begin{aligned}
& QF( a \Phi_0 + \Theta ,\ome) =0 \\
& PF( a \Phi_0 + \Theta ,\ome) =0 
\end{aligned}
\right.
\eeq
The first equation in \eqref{LS} is called \emph{auxiliary equation}, the second one is the \emph{bifurcation equation}. We introduce  the map $G:\RE_+\times \DD(H)\times \RE_+\to L^2$
\begin{equation*}
G(a, \Theta , \ome) := QF( a \Phi_0 + \Theta ,\ome) = Q  (H+\ome ) Q \Theta - Q|  a \Phi_0 + \Theta |^{2\mu } (  a \Phi_0 + \Theta),
\end{equation*}
hence, the auxiliary equation is equivalently written as $G(a, \Theta , \ome) = 0$.  Since
\be
D_\Theta G (0,0, E_0 ) = Q (H+ E_0) Q
\ee
is invertible then, by the Implicit Function Theorem in Banach spaces, the auxiliary equation defines locally in a neighborhood $I = (0,\ve)  \times (E_0-\delta,E_0+\delta) $ a unique function  $\Theta_* (a, \ome)$ in $C^1(I,\DD(H))$ such  that $Q\Theta_* = \Theta_*$, 
\be 
QF( a \Phi_0 +\Theta_* (a, \ome) ,\ome) =0,
\ee
and $\lim_{(a,\omega)\to(0^+,E_0)} \vertiii{\Theta_*(a,\omega)}_H = 0$. Indeed from the equation $G(a,\Theta_*(0,\omega),\omega) = 0$, and since $Q  (H+\ome ) Q $  is invertible with bounded inverse  in a neighborhood of $E_0$, one has that 
\[
\begin{aligned}
\vertiii{\Theta_*(a,\omega)}_H \leq & C \left( \||a\Phi_0|^{2\mu} a\Phi_0\| + \||\Theta_*(a,\omega)|^{2\mu}\Theta_*(a,\omega)\| \right) \\
\leq &  C \left( \vertiii{a\Phi_0}^{2\mu+1}_H + \vertiii{\Theta_*(a,\omega)}^{2\mu+1}_H \right) .
\end{aligned}
\] 
Since taking $\ve$ and $\delta$ small enough  we can make  $\vertiii{\Theta_*}_H$ arbitrarily small we  have
\begin{equation} \label{thetabdd} 
\vertiii{\Theta_* (a, \ome) }_{H} \leq C a^{2\mu+1}.
\end{equation}

\n
Now we turn our attention to the bifurcation equation. First we write it explicitly using the definition of $P$.
\begin{equation*}
a (\ome -E_0 ) - \Big( \Phi_0 , | a \Phi_0 + \Theta_* (a, \ome) |^{2\mu} ( a \Phi_0 + \Theta_* (a, \ome) ) \Big) =0.
\end{equation*}
This is an implicit equation w.r.t. two real parameters $a$ and $\ome$, we assume $a\neq0$ and recast it in the following form
\beq \label{bifeq}
f( a, \ome) \equiv  (\ome - E_0 ) - a^{2\mu} \Bigg( \Phi_0 , \lf|  \Phi_0 + \f{\Theta_* (a, \ome)}a  \ri|^{2\mu} \lf(  \Phi_0 + \f{\Theta_* (a, \ome)}a  \ri) \Bigg) =0.
\eeq
We want to use the  Implicit Function Theorem in \eqref{bifeq} to make explicit $\ome (a)$. By the bound \eqref{thetabdd} it is immediate that $f(a,\omega)$ is continuous, moreover 
\[\partial_\ome f( a ,  \ome) 
=  1- a^{2\mu}  
 \Bigg( \Phi_0 ,   (2\mu+1) \lf|  \Phi_0 + \f{\Theta_* (a, \ome)}a \ri|^{2\mu}  \f{\partial_\ome \Theta_* (a, \ome)  }a \Bigg).
\]
Which shows that  $\partial_\ome f( a ,  \ome)$ is also continuous. Notice that
\begin{align*}
\partial_\ome  \Theta_* (a, \ome) & = - ( D_\Theta G )^{-1}( a,  \Theta_* (a, \ome) , \ome ) \, D_\ome G (a,  \Theta_* (a, \ome) , \ome ) \\
& =  - ( D_\Theta G )^{-1}( a,  \Theta_* (a, \ome) , \ome ) Q  \Theta_* (a, \ome).
\end{align*}
Hence, by \eqref{thetabdd}, we have 
\be 
\vertiii{ \partial_\ome  \Theta_* (a, \ome) }_H \leq C a^{2\mu+1},
\ee
which implies that $\partial_\ome f (0, E_0 ) \neq 0$. We conclude that  \eqref{bifeq} defines uniquely a continuous  function $\ome_* (a)$ in a neighborhood of the origin  such that $\ome_* (0) = E_0$. Moreover, it is clear that for small $a$ 
\be 
a^{2\mu} \lf( \Phi_0 , \lf|  \Phi_0 + \f{\Theta_* (a, \ome)}a  \ri|^{2\mu} \lf(  \Phi_0 + \f{\Theta_* (a, \ome)}a \ri) \ri)  \geq 0
\ee
and then $\ome_* - E_0 \geq 0 $ that is $\ome_* \geq E_0$. We can give a more precise asymptotic behavior, that is
\begin{equation*}
 \ome_*(a) =  E_0 + a^{2\mu} \| \Phi_0\|_{2\mu + 2}^{2\mu+2}  + O(a^{4\mu}).
\end{equation*}
Concerning the regularity properties of $\omega_*$, exploiting the identity $\partial_a f(a,\omega_*(a)) = 0$ we conclude that $\omega_* \in C^1(0,\ve)$. Indeed, by  L'H\^opital's rule,  we infer 
\[ \| \Phi_0\|_{2\mu + 2}^{2\mu+2}  = \lim_{a\to0^+} \frac{\omega_*(a)-E_0}{a^{2\mu}} =  \lim_{a\to0^+} \frac{\omega_*(a)' }{2\mu a^{2\mu-1}},\] hence 
\[ \omega_*(a)' = 2\mu \| \Phi_0\|_{2\mu + 2}^{2\mu+2}   a^{2\mu -1} + o(a^{2\mu -1}),\]
which guarantees that $\omega_*$ is strictly increasing, hence invertible,  in $(0,\ve)$. We denote its inverse by $a_*(\omega)$. Obviously $a_*\in C^1(E_0,E_0+\de)$ and 
\begin{equation}\label{lemon}
a_*(\ome) = \lf( \frac{\ome-E_0}{ \| \Phi_0\|_{2\mu + 2}^{2\mu+2}}  \ri)^{\f{1}{2\mu} } + O \left(( \ome-E_0 )^{\f{1}{2\mu}+1 }\right),
\end{equation}
by the inequality $(A+B)^{\frac1{2\mu} }- B^{\frac1{2\mu}} \leq C A^{\frac{1}{2\mu}-1} B$, which holds true for all $0 < B < A/2$. 
The sought  solution  is given by 
\begin{equation*}
 \Phi(\omega) = a_*(\omega) \Phi_0 +  \Theta_* (a_*(\omega), \ome ).
\end{equation*}
We are left to prove properties  \eqref{mPhiome} and \eqref{EPhiome}. 

As  $\ome-E_0\to 0$, due to  \eqref{thetabdd} and \eqref{lemon} we have 
\begin{equation*}
m(\omega)=\|\Phi(\ome)\|^2  =   a_*(\ome)^2  + \|  \Theta_* (a_*(\omega), \ome )\|^2   =     \lf( \frac{\ome-E_0}{ \| \Phi_0\|_{2\mu + 2}^{2\mu+2}}  \ri)^{\f{1}{\mu} }  + o\left(( \ome-E_0 )^{\f{1}{\mu}}\right),
\end{equation*}
which proves Eq. \eqref{mPhiome}. By the regularity of $a_*$ and $\Theta_*(a,\omega)$ it follows that $m(\omega)$ is in $C^1(E_0,E_0+\de)$ and it is invertible because 
 \[ \left(\| \Phi_0\|_{2\mu + 2}^{2\mu+2}\right)^{-\frac{1}{\mu} } = \lim_{\omega\to E_0^+} \frac{m(\omega)}{(\omega - E_0)^{\frac1\mu}} = \lim_{\omega\to E_0^+} \mu \frac{m(\omega)'}{(\omega - E_0)^{\frac1\mu-1}}. \] 
Denoting its inverse by $\omega (m)$ one has 
\begin{equation}\label{lemon2}
\lf( \frac{ \ome(m)-E_0 }{ \| \Phi_0\|_{2\mu+2 }^{2\mu+2 } } \ri)^{\f{1}{\mu} } = m +o(m).
\end{equation}
Computing the energy we have that 
\[\begin{aligned}
 E[\Phi(\ome)] =& E^{lin} [\Phi(\ome)] - \frac{\|\Phi(\omega)\|_{2\mu+2}^{2\mu+2}}{\mu+1} \\
  =& 
 -E_0 a_*(\omega)^2+ (\Theta_*(a_*(\omega),\omega), H \Theta_*(a_*(\omega),\omega)) - \frac{\|a_*(\omega)\Phi_0 + \Theta_*(a_*(\omega),\omega) \|_{2\mu+2}^{2\mu+2} }{\mu+1}     \\  =& 
 -E_0 a_*(\omega)^2+ o(a_*(\omega)^2).
\end{aligned}
\]
$ E(m) = E[\Phi(\ome(m))]$ is continuous as a function of $m$ (indeed it is $C^1(0,\tilde m)$ for $\tilde m$ small enough), and recalling \eqref{lemon} and \eqref{lemon2} we get \eqref{EPhiome}. 
\end{proof}

\section{Main theorem\label{s:mainth}}
\begin{proof}[Proof of Theorem \ref{t:prob1}]
We prove first that $mE_0<\nu<\infty$. The lower bound $\nu>mE_0$ is a direct consequence of the fact that $E[\Psi] < E^{lin}[\Psi]$, for all $\Psi\in \EE$, and that, by the definition of $E_0$  and $E^{lin}$, one has
\[\inf \{ E^{lin}[\Psi] \text{ s.t. } \Psi\in \EE , \, M[\Psi]=m  \} = -mE_0.\]\\
To prove that $\nu<+\infty$ we first note that, by using  H\"older and   Gagliardo-Nirenberg inequalities, one can prove the bounds:  
\begin{equation}\label{4.1a}
 \|\Psi\|_{{2\mu+2}}^{2\mu+2} \leq c \|\Psi\|_{H^1}^{\mu} \|\Psi\|^{2+\mu}; \end{equation}
\begin{equation}\label{W-1}
(\Psi,W_- \Psi)  \leq \|W_-\|_r \|\Psi\|^2_{2r/(r-1)} \leq c \|W_-\|_r  \|\Psi\|_{H^1}^{2\alpha}\|\Psi\|_q^{2(1-\alpha)} \end{equation}
for all $q\in[2,2r/(r-1)]$ and with $ \alpha= \frac{2}{2+q}\left(1-\frac{q(r-1)}{2r}\right)$; and 
\begin{equation}\label{4.1b}
 |\Psi(\v)|^2\leq \|\Psi\|_{\infty}^2\leq c  \|\Psi\|_{H^1}\|\Psi\| \quad  \forall \v \in V \,.
\end{equation}
We remark that the inequalities \eqref{4.1a} - \eqref{4.1b} hold true for any connected finite graph. 
 If $M[\Psi] = m$, by \eqref{4.1a} - \eqref{4.1b} we have 
\[
 E[\Psi] + m
 \geq \| \Psi \|_{H^1}^2 -C \frac{m^{\frac{2+\mu}{2}}}{\mu+1}\|\Psi\|_{H^1}^{\mu}
  - C \sqrt m \sum_{\underline v\in V_v}|\alpha(\underline v)|   \|\Psi\|_{H^1} -  Cm^{1-1/(2r)} \|W_-\|_r  \|\Psi\|_{H^1}^{1/r}.
\]
We notice that for any $a,b,c,d>0$, $r\geq 1$, and   $0< \mu < 2$ there exist $\de,\beta >0$ such that $a x^2 - bx^\mu  -cx -d x^{1/r} > \de x^2 - \beta $, for any $x\geq0$, then  
\beq
\label{e:dec}
 E[\Psi] + m
 \geq \de \| \Psi\|_{H^1}^2 -\beta \,,
\eeq
which implies $\nu \leq \beta + m$.\\

In the  remaining part of the proof we shall prove that we can choose $m^*$ such that for $m<m^*$ minimizing sequences have a  convergent subsequence.

Let $\{\Psi_n\}_{n\in\NA}$ be a minimizing sequence, i.e., $\Psi_n \in\EE$, $M[\Psi_n]=m$, and $\lim_{n\to\infty} E[\Psi_n] = -\nu$. Concerning the mass constraint, we remark that it is enough to assume $M[\Psi_n]\to m$ as $n\to \infty$, in such a case one can define $\widetilde \Psi_n = \sqrt{m} \Psi_n /\|\Psi_n\|$ and note that $\lim_{n\to\infty} E[\widetilde\Psi_n] = \lim_{n\to\infty} E[\Psi_n] $. 

We shall prove that there exists $\hat \Psi \in H^1 (\GG)$ such that $M[\hat \Psi] = m $, $E[\hat \Psi] =-\nu$ and $\Psi_n \to \hat \Psi$ in $ H^1 (\GG)$.

We can assume that 
$E[\Psi_n] \leq -\nu/2$ then by inequality \eqref{e:dec}, up to taking a subsequence,  we can assume that 
\[
\sup_{n\in\NA}\|\Psi_n\|_{H^1}\leq \infty,\]
moreover the following lower bound holds true
\beq
\label{e:floor}
 \frac{1}{\mu+1} \| \Psi_n\|_{2\mu+2}^{2\mu+2} +(\Psi,W_-\Psi) +  \sum_{\v \in V_- } |\al(v)| |\Psi_{n}(\v)|^2 \geq
\frac{\nu}2 \,.
\eeq

Next we use Lem. \ref{l:cc} and  prove that  vanishing and dichotomy
cannot occur for $\{\Psi_n\}_{n\in\NA}$. Set $\tau = \lim_{t\to\infty}\liminf_{n\to\infty}
\rho(\Psi_n,t)$. First we prove that  vanishing cannot occur. If
$\tau=0$,  then by Lem. \ref{l:cc} there would exist a subsequence $\Psi_{n_k}$
such that $\| \Psi_{n_k}\|_{p} \to 0 $ for all $2<p\leq \infty$ but
this, together with Eqs. \eqref{W-1} and \eqref{4.1b},   would contradict 
\eqref{e:floor}. \\
To prove that dichotomy cannot occur, suppose $0<\tau<m$, then there
would exist $\VV_k$ and $\WW_k$ satisfying \eqref{dic1}-\eqref{dic7}.
In particular we know that
\[
 \liminf_{k\to \infty} \left(\|\Psi_{n_k}' \|^2 - \| \VV_k' \|^2
- \| \WW_k' \|^2 \right)  \geq 0
\]
\[
\lim_{k\to \infty} \lf( \| \Psi_{n_k}\|_p^p - \|\VV_k \|_p^p -
\| \WW_k \|_p^p \ri)=0 \qquad 2\leq p < \infty 
\]
and 
\be
\lim_{k\to \infty}\left||\Psi_{n_k}(\v)|^2-  |\VV_{k}(\v)|^2 -| \WW_{k}(\v)|^2\right| = 0\,.
\ee
Moreover we claim that 
\begin{equation}\label{claim1}
\lim_{k\to\infty}  (\Psi_{n_k}, W\Psi_{n_k}) - (\VV_k,W \VV_k) - (\WW_k, W\WW_k) \geq  0  ,
\end{equation}
we postpone the proof of this claim to the end of the discussion.  Summing up, we arrive at
\be
\liminf_{k\to\infty} \lf(
E[\Psi_{n_k} ] - E[ \VV_k] - E[\WW_k]
\ri) \geq 0 \,,
\ee
which implies
\beq
\label{e:black-1}
\limsup_{k\to\infty} \lf(
 E[ \VV_k] + E[\WW_k]
\ri) \leq -\nu \,.
\eeq
Notice that, given $\Psi\in\EE$ and $\de >0$, then
\[
E[\Psi] = \frac{1}{\de^2} E[\de \Psi] + \frac{\de^{2\mu} -1}{\mu+1} \| \Psi
\|_{2\mu+2}^{2\mu+2}.
\]
We remark that $\VV_k, \WW_k \in \EE$, since $\Psi_{n_k}$ satisfies   the continuity condition at
the vertices  and the multiplication with the cut-off functions
preserves that. Let $\de_k= \sqrt {m / M[\VV_k]}$ and $\ga_k = \sqrt {m / M[\WW_k]}$  such that $M[\de_k \VV_k] ,\,
M[\ga_k \WW_k] =m$. Then, using the above equality and the fact that
$E[\de_k \VV_k], E[\ga_k \WW_k] \geq -\nu$,  one has
\[
E[\VV_k] \geq - \frac{\nu}{\de^2_k} + \frac{\de^{2\mu}_k -1}{\mu+1} \| \VV_k
\|_{2\mu+2}^{2\mu+2}
\]
\[
E[\WW_k] \geq - \frac{\nu}{\ga^2_k} + \frac{\ga^{2\mu}_k -1}{\mu+1} \| \WW_k
\|_{2\mu+2}^{2\mu+2}
\]
from which 
\[
E[\VV_k]+E[\WW_k] \geq -\nu \lf( \frac{1}{\de^2_k} + \frac{1}{\ga^2_k} \ri) +
\frac{\de^{2\mu}_k -1}{\mu+1} \| \VV_k \|_{2\mu+2}^{2\mu+2} +
\frac{\ga^{2\mu}_k -1}{\mu+1} \| \WW_k \|_{2\mu+2}^{2\mu+2}\,.
\]
Notice that  by \eqref{dic4}
\[
\frac{1}{\de^2_k} \to \frac{\tau}{m} \qquad \qquad \frac{1}{\ga^2_k} \to 1-\frac{\tau}{m}\,.
\]
Let $\theta = \min \{ (\tau/m)^{-\mu} , (1-\tau/m)^{-\mu} \}$ and notice that $\theta
>1$ since $0<\tau/m <1$. Therefore
\begin{align}
\label{e:black-2}
\liminf_{k\to\infty} \lf(
 E[ \VV_k] + E[\WW_k]
\ri) 
&\geq -\nu + \frac{\theta -1}{\mu+1} \liminf_{k\to\infty} \| \Psi_{n_k}
\|_{2\mu+2}^{2\mu+2} > -\nu,
\end{align}
where we used the fact that $\liminf_{k\to\infty} \| \Psi_{n_k}
\|_{2\mu+2}^{2\mu+2} \neq 0$. The latter claim is proved by noticing
that $\liminf_{k\to\infty} \| \Psi_{n_k} 
\|_{2\mu+2}^{2\mu+2} = 0$,  together with $\| \Psi_{n_k}
\|_{H^1}$ bounded and Eqs. \eqref{W-1} and \eqref{4.1b},  would imply $\liminf_{k\to\infty} (\Psi_{n_k},W_-\Psi_{n_k}) =0$
 and $\liminf_{k\to\infty} \|\Psi_{n_k}\|_\infty =0$. Hence, there would be a contradiction with   inequality \eqref{e:floor}. We conclude that if $0<\tau<m$  we get a contradiction, cfr. inequalities \eqref{e:black-1} and \eqref{e:black-2}. To end the analysis of the case $0<\tau<m$ we are left to prove the claim \eqref{claim1}.  We rewrite $W = W_+- W_-$ and consider first the term with $W_+$. We have that 
\[\begin{aligned}
 &(\Psi_{n_k}, W_+\Psi_{n_k}) - (\VV_k,W_+ \VV_k) - (\WW_k, W_+\WW_k)     \\ 
=&   \sum_e \int_{I_e}(W_+)_e \left[1 - (\Theta_k)_e^2 - (\Phi_k)_e^2\right] |(\Psi_{n_k})_e|^2 dx \geq 0. 
\end{aligned}\]
Since $\VV_k$ and $\WW_k$ have disjoint supports, we have that 
\begin{equation*}\begin{aligned}
&\left| (\Psi_{n_k}, W_-\Psi_{n_k}) - (\VV_k,W_- \VV_k) - (\WW_k, W_-\WW_k) \right|  \\ 
\leq &  |(\ZZ_k, W_-\ZZ_k)| +2|(\VV_k,W_- \ZZ_k)| +2 |(\WW_k, W_-\ZZ_k) | \\  
\leq & | (\ZZ_k, W_-\ZZ_k)| +2(\VV_k,W_-\VV_k)^{1/2 }(\ZZ_k,W_- \ZZ_k)^{1/2} +2(\WW_kW_-\WW_k)^{1/2 }(\ZZ_k,W_- \ZZ_k)^{1/2} .
\end{aligned}\end{equation*}
The terms containing $\VV_k$ and $\WW_k$ are bounded by Lemma \ref{l:cc} and inequality \eqref{W-1}. The terms containing $\ZZ_k$, go to zero by  inequality \eqref{W-1}  and because $\|\ZZ_k\|\to 0$ by Eq. \eqref{ZZk}. From which the claim \eqref{claim1} follows. 

Since  $0\leq\tau<m$ leads us to a contradiction, it must be $\tau=m$. 

Now we prove that for $m<m^\ast$ the minimizing sequence is not {\em runaway}. Here the limitation on the mass plays a role for the first time.
By absurd suppose that $\{\Psi_n\}_{n\in\NA} $ is {\em runaway}, then we have that 
\begin{equation}\label{limit}
\lim_{n\to\infty} \Psi_{n} (\v) =0\quad \forall \underline{v}\in V\qquad \text{and}\qquad \lim_{n\to\infty}(\Psi_n, W_- \Psi_n)= 0.
\end{equation} The first limit  is a direct consequence of Lem. \ref{l:cc}, Eq. \eqref{e:run-1}. To prove the second one, assume that $\Psi_n$ escapes at infinity on the external  edge $e^*$ (this can always be done up to taking a subsequence). We note that
\begin{equation*}
\lim_{n\to\infty}\int_{I_e} (W_-)_e |(\Psi_n)_e|^2 dx = 0 \qquad \forall e \neq e^* ,
\end{equation*}
this is a  direct consequence of Lemma \ref{l:cc} and inequality  \eqref{W-1} applied to the edge $I_e$. We are left to prove that 
\begin{equation}\label{holiday}
\lim_{n\to \infty}\int_{0}^{+\infty} (W_-)_{e^*} |(\Psi_n)_{e^*}|^2 dx = 0 . 
\end{equation}
 We start by noticing that  $\|\Psi_n\|_{H^1}$ is uniformly bounded, hence, so is $\|\Psi_n\|_p$ for all $p\in [2,+\infty]$, by \eqref{gajardo3} (with $q=2$). As a consequence, we have that  for any $\ve>0$ there exists $R>0$ (independent of $n$) such that
 \begin{equation*}
\int_{R}^{+\infty} (W_-)_{e^*} |(\Psi_n)_{e^*}|^2 dx \leq \|(W_-)_{e^*}\|_{L^r(R,\infty)} \|\Psi_n\|_{2r'}^2 \leq \ve,
\end{equation*}
with $r'$ such that $r^{-1}+{r'}^{-1} =1 $.  For such $R$, there exists $n_0$ such that for all $n>n_0$ one has 
\begin{equation*}
\int_{0}^{R} (W_-)_{e^*} |(\Psi_n)_{e^*}|^2 dx \leq  \|W_-\|_r \|(\Psi_n)_{e^*}\|_{L^{2r'}(0,R)}^2 \leq \ve  
\end{equation*}
by \eqref{e:run-1} (see also Rem. \ref{r:2.8}), from which the second limit in \eqref{limit}. 

Recalling that, by Lem. \ref{l:cc} - Eq. \eqref{e:run-1}, one has $\lim_{n\to\infty}\|(\Psi_n)_e\|_{L^{2\mu+2}(I_e)} =0$ for all $e\neq e^*$, and by Eq. \eqref{limit}, we infer 
 \begin{equation}
 \label{little}
 \lim_{n\to \infty} E[\Psi_n]  \geq \lim_{n\to\infty }  \int_0^\infty |(\Psi_n)_{e^*}'|^2 dx -\frac{1}{\mu+1} \int_0^\infty  |(\Psi_n)_{e^*}|^{2\mu+2} dx.
 \end{equation}
 Let  $\chi:\RE_+ \to [0,1]$ be a  function such that $\chi \in C^\infty(\RE_+)$, $\chi(0) = 0$ and $\chi(x)=1$ for all $x\geq 1$.  Define 
\[\psi_n^*(x) :=  \chi(x)(\Psi_n)_{e^*}(x) , \]
so that $\psi_n^*(0)= 0$, and ${\|\psi_n^*}'\|_{L^2(\RE_+)}^2 \leq c$. By Lem. \ref{l:cc} - Eq. \eqref{e:run-1}, for all $p\geq 2$,
\begin{equation}\label{psinstar1}
\lim_{n\to\infty} \|\Psi_n\|_p^p = \lim_{n\to\infty} \|(\Psi_n)_{e^*}\|_{L^p((0,\infty))}^p = \lim_{n\to\infty} \|\psi_n^*\|_{L^p((0,\infty))}^p,
\end{equation}
where we used the fact that $\lim_{n\to\infty} \|(\Psi_n)_{e^*}\|_{L^p((0,1))} = 0  $, and the trivial bound $ \|\psi_n^*\|_{L^p((0,1))} \leq  \|\chi\|_{L^\infty((0,1))} \|(\Psi_n)_{e^*}\|_{L^p((0,1))}$.  In particular, $ \lim_{n\to\infty} \|(\Psi_n)_{e^*}\|_{L^2((0,\infty))}^2 = \lim_{n\to\infty} \|\psi_n^*\|_{L^2((0,\infty))}^2 = m$. 
Moreover we have that 
 \begin{equation}\label{psinstar2}
  \lim_{n\to\infty } \frac12 \int_0^\infty |(\Psi_n)_{e^*}'|^2 dx \geq   \lim_{n\to\infty } \frac12 \int_0^\infty |{\psi_n^*}'|^2 dx. 
\end{equation}
 To prove the latter inequality, we note that 
\[\begin{aligned}
 \lim_{n\to\infty } \int_0^\infty |(\Psi_n)_{e^*}'|^2 - |{\psi_n^*}'|^2 dx 
 =&   
 \lim_{n\to\infty } \int_0^\infty |(\Psi_n)_{e^*}'|^2\left(1-\chi^{2}\right)  dx  \\ 
 & +
 \lim_{n\to\infty } \int_0^1 |(\Psi_n)_{e^*}|^2 {\chi'}^{2} + 2\chi {\chi'} \Re \overline {(\Psi_n)_{e^*}'}(\Psi_n)_{e^*} dx \\
 = &   \lim_{n\to\infty } \int_0^1 |(\Psi_n)_{e^*}'|^2\left(1- \chi^{2}\right)  dx  \geq 0 ,
\end{aligned}\]
where we used again Lem. \ref{l:cc} - Eq. \eqref{e:run-1} and the bounds $\|\chi\|_\infty, \|\chi'\|_\infty\leq c $. 

We have the following chain of inequalities/identities  
\begin{align}
&\lim_{n\to \infty} E[\Psi_n] \nonumber \\
& \geq 
\lim_{n\to\infty } \int_0^\infty |{\psi_n^*}'(x)|^2 dx -\frac{1}{\mu+1} \int_0^\infty  |\psi_n^*(x)|^{2\mu+2} dx \nonumber\\
&\text{(we used Eqs. \eqref{little}, \eqref{psinstar1} and \eqref{psinstar2})} \nonumber \\
&   \geq \inf \Big\{ \int_0^\infty |\psi ' (x)|^{2} dx -\frac{1}{\mu+1} \int_0^\infty |\psi (x)|^{2\mu+2} \, dx  \text{ s.t. } \psi\in H^1(\RE^+), \, \psi(0)=0\, ,\| \psi\|_{L^2(\RE^+)}^2 =m \Big\} \nonumber \\ 
&\text{(we used the fact that $\psi_n^*\in H^1(\RE_+)$,  $\psi_n^*(0)= 0$, and $\|\psi_n^*\|_{L^2(\RE_+)}^2 \to  m$ as $n\to\infty$)} \nonumber
\\ 
&  =  \inf \Big\{  \int_\RE |{\psi}' (x)|^{2} dx-\frac{1}{\mu+1} \int_\RE|{\psi} (x)|^{2\mu+2} \, dx   \text{ s.t. } \psi\in H^1(\RE), \, \psi(x)=0 \; \forall x\leq 0 ,\| \psi\|_{L^2(\RE)}^2 =m  \Big\}\nonumber \\
&\text{(where we used the fact that $\psi\in H^1(\RE_+)$ and $\psi(0)= 0$ if and only if its zero extension} \nonumber\\
&\text{belongs to $H^1(\RE)$, see, e.g., \cite[Th. 5.29]{AF03}})\nonumber
\\ 
& \geq \inf \left\{  \int_\RE |\psi  ' (x)|^{2} dx-\frac{1}{\mu+1} \int_\RE|\psi (x)|^{2\mu+2} \, dx  \text{ s.t. } \psi\in H^1(\RE), \, \| \psi\|_{L^2(\RE)}^2 =m  \right\} \label{infsol}\\
&\text{(we enlarged the set on which the $\inf$ is taken).} \nonumber
\end{align}

It is well known that the infimum in the latter minimization problem is indeed  attained and that the minimizing function (up to translations and phase multiplications) is given by the soliton profile 
\begin{equation*}
\phi(x) = [ (\mu + 1) \ome_{\RE}]^{\frac{1}{2\mu}} \sech^{\frac{1}{\mu}} (\mu \sqrt{\ome_{\RE}} x).
\end{equation*}
The frequency   $\omega_\RE$ is fixed by the mass constraint through the relation  
\begin{equation*}
m =\|\phi\|^2_{L^2(\RE)} =   2\f{(\mu+1)^{\f 1 \mu }  }{\mu} \ome_\RE^{ \f 1 \mu - \f 1 2} \int_0^1 (1-t^2)^{\f 1 \mu -1} dt,
\end{equation*}
which gives 
\begin{equation*}
\omega_\RE = \left( 2\f{(\mu+1)^{\f 1 \mu }  }{\mu}  \int_0^1 (1-t^2)^{\f 1 \mu -1} dt\right)^{-\frac{2\mu}{2-\mu}} m^{\frac{2\mu}{2-\mu}}.
\end{equation*}
The infimum in the minimization problem \eqref{infsol} is given by  the nonlinear energy  of the soliton 
\begin{equation*}
  	  \int_\RE |\phi'(x) |^{2} dx-\frac{1}{\mu+1} \int_\RE|\phi(x)|^{2\mu+2} \, dx  =
- \f{2-\mu}{2+\mu}  \; \ome_\erre \, m = -\gamma_\mu m^{1+\frac{2\mu}{2-\mu}}, 
\end{equation*}
with $\gamma_\mu= \f{2-\mu}{2+\mu}\left( 2\f{(\mu+1)^{\f 1 \mu }  }{\mu}  \int_0^1 (1-t^2)^{\f 1 \mu -1} dt\right)^{-\frac{2\mu}{2-\mu}}$. So that by the inequality \eqref{infsol}, we conclude that  if $\Psi_n$ is a runaway sequence  it must be 
\begin{equation}\label{lower}
\lim_{n\to \infty} E[\Psi_n] \geq  -\gamma_\mu m^{1+\frac{2\mu}{2-\mu}} .
\end{equation}

To show that for $m$ small enough a minimizing sequence cannot be runaway we  compute the energy   on a trial function.  As trial function we choose the  function $\Phi(\ome)$, with $\omega = \omega(m)$,  given in Th. \ref{t:bif}. By the same theorem we have that the energy  $ E[\Phi(\ome)] = -E_0 m + o(m)$, and by a simple continuity argument we infer that there exists $m^*$  such that $ E[\Phi(\ome)]  < -\gamma_\mu m^{1+\frac{2\mu}{2-\mu}}$ for all  $0<m<m^*$. This, together with the lower bound \eqref{lower},   imply that a minimizing sequence cannot be runaway. 

By Lem. \ref{l:cc} we conclude that for all $0<m<m^*$ there exists a state $\hat\Psi \in\EE$ such that  minimizing sequences converge, up to taking  subsequences, to $\hat\Psi $ in $L^p$ for $p  \geq 2$. In particular, $M[\hat \Psi] = m$, and the potential, vertices, and nonlinear terms in $E[\Psi_n]$ converge to the corresponding ones in $E[\hat\Psi]$. Taking into account also the weak lower continuity of the $H^1$ norm we have
\[
E[ \hat \Psi] \leq \lim_{n\to \infty} E[ \Psi_n] = -\nu
\]
which implies that $E[ \hat \Psi] = -\nu$. Since $E[ \hat \Psi] = \lim_{n\to \infty}  E[ \Psi_n]$ then $\| \hat \Psi ' \| = \lim_{n\to \infty} \|  \Psi_n ' \|$ and we
have proved that $\Phi_n \to \hat \Psi$ in $H^1$.
\end{proof}

\subsection*{Acknowledgments.}  
The authors are grateful to Gregory Berkolaiko, Pavel Exner and Delio Mugnolo for useful discussions. 
D.F. and D.N.  acknowledge the support of FIRB 2012 project ``Dispersive dynamics: Fourier Analysis and Variational Methods'', Ministry of University and
Research of Italian Republic  (code RBFR12MXPO).
C.C. acknowledges the support of the FIR 2013 project ``Condensed Matter in Mathematical Physics'', Ministry of University and
Research of Italian Republic  (code RBFR13WAET)

\end{document}